\documentclass[11pt]{article}
\usepackage[margin=1in]{geometry}
\usepackage[utf8]{inputenc}
\usepackage[dvipsnames]{xcolor}
\usepackage{hyperref,url}
\hypersetup{
    colorlinks,
    linkcolor=black,
    citecolor=blue,
    urlcolor={blue!80!black}
}
\usepackage[T1]{fontenc}
\usepackage{tikz}
\usepackage{graphicx,subcaption}
\usepackage{float}
\usepackage{authblk}

\usepackage{csquotes}
\usepackage[normalem]{ulem}
\RequirePackage[inline,shortlabels]{enumitem}

\usepackage{amsmath,amsfonts,amsthm,amssymb,dsfont,bbm,bm}
\theoremstyle{plain}
\newtheorem{proposition}{Proposition}
\newtheorem{corollary}{Corollary}
\newtheorem{remark}{Remark}
\newtheorem{example}{Example}
\newtheorem{definition}{Definition}
\newtheorem{lemma}{Lemma}
\newtheorem{theorem}{Theorem}

\AtBeginDocument{\def\doi#1{\url{https://doi.org/#1}}}
\usepackage[backend=biber, style=authoryear, natbib=true]{biblatex}
\AtEveryBibitem{
  \clearfield{month}
  \clearfield{day}
}

\DeclareSourcemap{
  \maps[datatype=bibtex]{
    \map{
      \step[fieldsource=doi, final]
      \step[fieldset=url, null]
    }
  }
}

\usepackage{array,multirow}
\newcolumntype{P}[1]{>{\centering\arraybackslash}p{#1}}
\newcolumntype{M}[1]{>{\centering\arraybackslash}m{#1}}
\usepackage{float}

\usepackage{siunitx}

\usepackage{hyperref}
\hypersetup{
    colorlinks,
    linkcolor=blue,
    citecolor=blue,
    urlcolor={blue!80!black}
}

\usepackage{dsfont}
\usepackage{cases}

\newcommand{\calN}{\mathcal{N}}

\newcommand{\bbR}{\mathbb{R}}

\newcommand{\FB}{\mathrm{FB}}
\newcommand{\FAR}{\mathrm{FAR}}
\newcommand{\HR}{\mathrm{HR}}
\newcommand{\Pre}{\mathrm{Pre}}
\newcommand{\Rec}{\mathrm{Re}}

\newcommand{\onefun}{\mathds{1}}
\newcommand{\ind}[1]{\mathds{1}\{#1\}}

\usepackage{algorithm}
\usepackage{algpseudocode}

\title{Spatial Aggregation of ROC and Precision-Recall Curves}
\author[1,2,*]{Romain Pic}
\author[3]{Zhongwei Zhang}
\author[2]{Sebastian Engelke}
\author[1]{Johanna Ziegel}

\affil[1]{Seminar for Statistics, ETH Zurich, Zurich, 8092, Switzerland}
\affil[2]{GSEM-RISIS, University of Geneva, Geneva, 1205, Switzerland}
\affil[3]{Institute of Statistics, Karlsruhe Institute of Technology, Karlsruhe, 76185, Germany}
\affil[*]{Corresponding author: romain.pic@math.ethz.ch}

\begin{document}

\maketitle

\begin{abstract}
    Receiver Operating Characteristic (ROC) and Precision–Recall (PR) curves are widely used to assess the discrimination ability of forecasts for binary events, such as threshold exceedances or warnings of extreme events. In weather forecasting, forecasts are provided as spatial fields, yielding location-wise ROC and PR curves that are often aggregated to facilitate comparison. However, the effect of the aggregation strategy on performance assessment remains poorly understood.
    
    We investigate how different aggregation strategies for ROC and PR curves affect the assessment of discrimination ability. In particular, we identify conditions under which aggregation strategies satisfy two desirable properties for fair comparison: preservation of dominance between forecasts and preservation of concavity or achievability of the curves. We obtain sufficient conditions and propose two strategies satisfying them. They are compared with existing strategies from the literature, and we analyze their properties and highlight potential pitfalls that may lead to misleading interpretations. Based on these findings, we provide practical guidelines for the interpretation of aggregated ROC and PR curves. The proposed framework is illustrated with AI-based global weather forecasts, showing how different aggregation strategies can yield different rankings of competing forecasts.
\end{abstract}

\section{Introduction}

Binary events are central to weather forecasting, as many meteorological phenomena are naturally represented in this way: rain, frost, or tornadoes either occur or not. Continuous variables are also routinely reduced to binary events through the exceedance of a threshold, and such thresholds are typically imposed by the user rather than the forecaster: precipitation amounts triggering hydrological warnings, temperature thresholds for winter road maintenance, or wind speeds beyond which construction or transport operations are suspended \citep{MeteoSwiss2026DangerLevels}. In both cases, the resulting forecast feeds into the simplest decision-making setup, in which a single action is either taken or not, such as issuing a warning.

Two definitions of extreme events rely on binary events corresponding to the exceedance of a target threshold by a variable of interest. \textit{Absolute extreme events} are defined for a fixed threshold across locations, season, and time, and \textit{relative extreme events} use location-, season-, and time-specific thresholds often based on the climatology. Absolute extreme events can be of interest when the impact of the extreme event is not location-dependent. For example, MeteoSwiss determines heat warning levels based on daily mean temperatures exceeding 25°C and 27°C thresholds for three consecutive days \citep{MeteoSwiss2026}. This definition of heat warning levels assumes that the impact of heat events is homogeneous across Switzerland. On the other hand, relative extreme events can account for the potential impact of an event because resilience to an event often depends on its rarity, allowing for heterogeneity across the different locations, seasons, or times considered (e.g., rain warnings at MeteoSwiss; \cite{MeteoSwiss2026RainWarnings}).\\

\begin{figure}[ht]
    \centering
    \includegraphics[width=\linewidth]{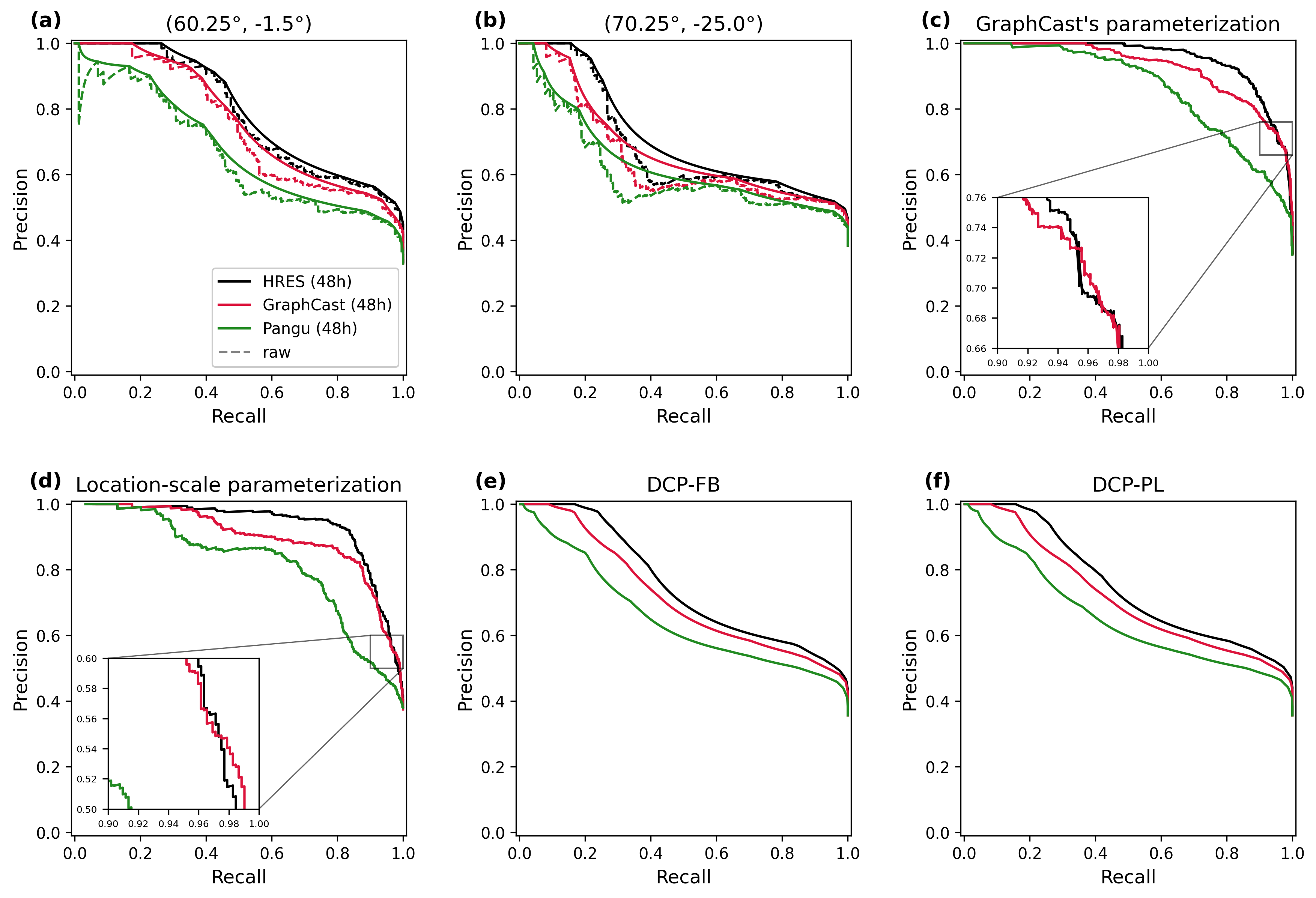}
    \caption{Location-wise raw (dashed lines) and achievable (solid) PR curves (a,b), and aggregated PR curves using GraphCast's parameterization (c), the location-scale parameterization (d), DCP-FB (e), and DCP-PL (f), for HRES (black), GraphCast (red), and Pangu (green) at a 2-day lead time over two grid points with latitude and longitude (60.25\textdegree, -1.5\textdegree) and (70.25\textdegree, -25.0\textdegree). The target thresholds correspond to the 75th percentiles of ERA5 for each location, time of day, and month of year over 1979-2019.}
    \label{fig:agg-PR-pair}
\end{figure}

Receiver operating characteristic (ROC) and precision-recall (PR) curves are popular tools for the verification of forecasts for binary events. ROC curves represent the discrimination ability of forecasts, i.e., their ability to distinguish between the occurrence and non-occurrence of the event \citep{Fawcett2006}. Moreover, both are standard methods for extreme event verification \citep[e.g.,][]{Lam2023,BenBouallegue2024}. Despite the equivalence between ROC and PR curves for comparing forecasts \citep{Davis2006}, they exhibit different graphical sensitivities that can lead to divergent visual interpretations when dealing with rare events \citep{Saito2015}. 

Recently, \citet{Lam2023} and \citet{Zhang2026} have aggregated PR curves over the globe to compare the performance of AI-based weather forecasts to that of physics-based forecasts at predicting extreme temperatures. The aggregation of ROC and PR curves aims to summarize the information contained in the individual location-wise ROC or PR curves. However, the effect of the aggregation strategy on performance assessment has not been studied, even though it can lead to questionable conclusions when done improperly. In Figure~\ref{fig:agg-PR-pair}, three forecasts are compared across two locations, and they have the same dominance relationship in terms of PR curves (a,b). Certain aggregation strategies do not preserve this dominance, leading to crossing aggregated curves (c,d). Such a crossing of PR curves implies that there is no single best forecast; rather, the best forecast depends on the end user's preference. Nonetheless, aggregation strategies that preserve dominance lead to aggregated curves with the same dominance relationships as those present location-wise (e,f). In this article, we propose dominance- and concavity-preserving aggregation strategies with FB parametrization (DCP-FB) and with parallel lines parametrization (DCP-PL).\\

The remainder of this article is organized as follows. Section~\ref{sec:background} introduces the necessary background on forecast verification for binary outcomes, and on ROC and PR curves in particular. Section~\ref{sec:spatial-agg} focuses on the aggregation of ROC and PR curves: Sections~\ref{subsec:agg-param-strat}~and~\ref{subsec:agg-interp-counts} set up a general framework and recall two aggregation strategies from the literature, while Sections~\ref{subsec:preservation-dom}~and~\ref{subsec:preservation-concavity-achievability} investigate the preservation of dominance, concavity, and achievability. Section~\ref{subsec:examples} shows by example why aggregating counts may fail to preserve these properties, and gives sufficient conditions under which aggregating interpolated counts preserves the properties. Section~\ref{sec:applications} illustrates the practical importance of these properties and compares the curves produced by the different strategies on the WeatherBench 2 dataset \citep{Rasp2024}. Section~\ref{sec:conclusion} summarizes the contributions, discusses future research directions, and provides recommendations. Proofs are postponed to the Appendix.

\section{Background}\label{sec:background}

\subsection{Basic performance measures}

We consider a forecast $X\in\bbR$ and an observation $Y\in\bbR$. We let $(x_i, y_i), i=1,\dots,n$, be $n$ realizations of the forecast-observation pair $(X,Y)$. An event, potentially extreme, is predicted if the forecast exceeds a certain threshold $t_X$, called the \textit{decision threshold}. Similarly, an event is observed if the observation exceeds a \textit{target threshold} $t_Y$. We will consider the target threshold $t_Y$ to be fixed, so the observations $y_i$ are reduced to binary observations. For a given decision threshold $t_X$, the comparison between forecasts and observations can be summarized by the counts of the four possible cases presented in the \textit{contingency table} (Table~\ref{table:contingency}). The coefficients of the contingency table are the counts of \textit{hits}, $a$, \textit{false alarms}, $b$, \textit{misses}, $c$, and \textit{correct rejections}, $d$. The coefficients sum to the sample size $n=a(t_X)+b(t_X)+c(t_X)+d(t_X)$, for all $t_X\in\bar{\bbR}$. In the machine learning literature, the contingency table is called the \textit{confusion matrix}, and its entries are called the true positives, $a$, the false positives, $b$, the false negatives, $c$, and the true negatives, $d$.

\begin{table}
    \caption{Contingency table.}
    \centering
    \renewcommand{\arraystretch}{2}
    \begin{tabular}{|M{.65cm} M{1.25cm}|M{6.25cm}|M{6.25cm}|}
        \cline{3-4}
        \multicolumn{2}{c|}{}&\multicolumn{2}{c|}{Observation}\\
        \multicolumn{2}{c|}{}& $Y> t_Y$ & $Y\leq t_Y$ \\
        \hline
        \multirow{2}{*}{\rotatebox[origin=c]{90}{\parbox[c]{1.75cm}{\centering Forecast}}} & \small{$X> t_X$} & $a(t_X)= \sum_{i=1}^n \ind{x_i>t_X}\ind{y_i> t_Y}$ & $b(t_X)= \sum_{i=1}^n \ind{x_i>t_X}\ind{y_i\leq t_Y}$ \\
        \cline{2-4}
        & \small{$X\leq  t_X$} & $c(t_X)= \sum_{i=1}^n \ind{x_i\leq t_X}\ind{y_i> t_Y}$ & $d(t_X)= \sum_{i=1}^n \ind{x_i\leq t_X}\ind{y_i\leq t_Y}$\\
        \hline
    \end{tabular}
    \label{table:contingency}
\end{table}

The \textit{false alarm rate} (FAR), the \textit{hit rate} (HR) or \textit{recall}, and the \textit{precision} for the decision threshold $t_X$ are given by
\begin{align}
    \FAR(t_X) &= \frac{\sum_{i=1}^n \ind{x_i>t_X}\ind{y_i\leq t_Y}}{\sum_{i=1}^n \ind{y_i\leq t_Y}} = \frac{b(t_X)}{b(t_X)+d(t_X)},\label{eq:FAR}\\
    \HR(t_X) = \Rec(t_X) &= \frac{\sum_{i=1}^n \ind{x_i>t_X}\ind{y_i>t_Y}}{\sum_{i=1}^n \ind{y_i> t_Y}} = \frac{a(t_X)}{a(t_X)+c(t_X)},\label{eq:HR-Re}\\
    \Pre(t_X) &= \frac{\sum_{i=1}^n \ind{x_i>t_X}\ind{y_i>t_Y}}{\sum_{i=1}^n \ind{x_i> t_X}} = \frac{a(t_X)}{a(t_X)+b(t_X)}.\label{eq:Pre}
\end{align}
The \textit{base rate} is the proportion of observed occurrences in the sample, i.e.,
\begin{equation*}
    p=\frac{\sum_{i=1}^n \onefun\{y_i>t_Y\}}{n} = \frac{a(t_X)+c(t_X)}{n},
\end{equation*}
and is a characteristic of the observations only and thus independent of $t_X$. The \textit{frequency bias} (FB) is the ratio between the number of forecast occurrences and of observed ones, i.e.,
\begin{equation*}
    \FB(t_X) = \frac{\sum_{i=1}^n\onefun\{x_i>t_X\}}{\sum_{i=1}^n\onefun\{y_i>t_Y\}}  = \frac{a(t_X)+b(t_X)}{a(t_X)+c(t_X)}.
\end{equation*}

Given a fixed sample size $n$, performance measures (such as recall or precision) can be expressed in terms of the base rate, the FAR, and the HR \citep{Hogan2011}. Indeed, recall and precision can be expressed as
\begin{gather}
    \Rec = \HR;\label{eq:Re-pFH}\\
    \Pre = \frac{p\cdot\HR}{p\cdot\HR+(1-p)\cdot\FAR}.\label{eq:Pre-pFH}
\end{gather}

\subsection{Receiver Operating Characteristics (ROC)}

We recall the concept of receiver operating characteristic (ROC) curves and present related notions. We follow \citet{GneitingVogel2022} and let the \textit{raw ROC diagnostic} be the union of the points of the form $(\FAR(t_X), \HR(t_X))$, where $t_X\in\bar{\bbR}:=[-\infty, \infty]$. The \textit{(raw) ROC curve} is obtained by linearly interpolating points of the raw ROC diagnostic. The linear interpolation is justified by the fact that any point between two points in the ROC space can be achieved by randomly changing the forecast values of one point of the curve into those of the other point \citep{Hogan2011}; details are given in Remark \ref{remark:random-mixture}.

\begin{remark}\label{remark:random-mixture}
    Let $t_X,t_X'\in\bar{\bbR}$ be two distinct decision thresholds. Let $t_Y$ be a fixed target threshold. For a given sample of forecast-observation pairs $(x_1,y_1),\dots, (x_n,y_n)\in\bbR^2$, $t_X$ and $t_X'$ correspond to two contingency tables and thus two points in ROC space. Without any additional information, new threshold exceedances can be obtained by randomly drawing a decision threshold between $t_X$ and $t_X'$, i.e., a proportion $\alpha$ of the forecast exceedances are computed with respect to $t_X$ and a proportion $1-\alpha$ are computed with respect to $t_X'$. If all the forecast realizations are compared with $t_X$, $(a(t_X),b(t_X),c(t_X),d(t_X))$ is obtained, and if all the forecast realizations are compared with $t_X'$, $(a(t_X'),b(t_X'),c(t_X'),d(t_X'))$ is obtained. For any other proportion of switched forecast realizations, the expected contingency table with respect to the random draw of the decision threshold corresponds to the component-wise interpolated contingency table,
    \begin{equation*}\label{eq:contingency-interpolation}
        \begin{cases}
            a(t_X'')=\alpha a(t_X)+ (1-\alpha) a(t_X')\\
            b(t_X'')=\alpha b(t_X)+ (1-\alpha) b(t_X')\\
            c(t_X'')=\alpha c(t_X)+ (1-\alpha) c(t_X')\\
            d(t_X'')=\alpha d(t_X)+ (1-\alpha) d(t_X')\\
        \end{cases},
    \end{equation*}
    where $t_X''=\alpha t_X+(1-\alpha)t_X'$, with $\alpha\in[0,1]$ being the proportion of forecast samples compared with $t_X$. In ROC space, the corresponding point is on the segment between $(\FAR(t_X),\HR(t_X))$ and $(\FAR(t_X'),\HR(t_X'))$ since only the numerators in \eqref{eq:FAR} and \eqref{eq:HR-Re} are affected. The resulting FAR and HR are expressed as
    \begin{align*}
        \FAR(t_X'') &= \alpha\FAR(t_X)+(1-\alpha)\FAR(t_X');\\ 
        \HR(t_X'') &= \alpha\HR(t_X)+(1-\alpha)\HR(t_X').
    \end{align*}
\end{remark}

Without loss of generality, we identify any ROC curve with a function $R \in \mathcal{R}$, where $\mathcal{R}$ is the class of all functions $R:[0,1]\to[0,1]$ that are right-continuous, non-decreasing, and satisfy the boundary conditions $R(0) = 0$ and $R(1) = 1$. Any ROC curve falls into this class \citep[Theorem~4]{GneitingVogel2022}.

The ROC curve is given by a family of contingency tables indexed by $t\in\bar{\bbR}$. We consider a family of contingency tables to be composed of contingency tables taking real positive values,
\begin{equation*}
\left\{
    \begin{bmatrix}
        a(t) & b(t) \\
        c(t) & d(t)
    \end{bmatrix}, t\in\bar{\bbR}\right\},
\end{equation*}
such that, for all $t\in\bar{\bbR}$, $a(t)+b(t)+c(t)+d(t)$ and $a(t)+c(t)$ are constant and $a(t)$ and $b(t)$ are either both increasing or both decreasing with $t$ taking values from $0$ to $a(t)+c(t)$ and $b(t)+d(t)$, respectively (both included). Note that the range of the parameter $t$ and the choice between increasing or decreasing $a(t)$ and $b(t)$ are conventions. Such a family of contingency tables can be used to build a ROC curve. 
This parameterization of a family of contingency tables provides flexibility to the indexing of ROC curves beyond the ones based on the decision threshold $t_X$ and the $\FAR$.\\

Despite belonging to a broader class of functions, ROC curves ought to be concave. As ROC curves aim to assess potential predictive performance, they should assess the predictive performance contained in the forecast without additional information. The best ROC curve that can be obtained by allowing for random mixture (see Remark~\ref{remark:random-mixture}) is the concave hull of the raw ROC diagnostic, which is (as its name indicates) concave. Formally, a ROC curve $R\in\mathcal{R}$ is \textit{concave} if for any three different points $(x_1,y_1),(x_2,y_2),(x_3,y_3)\in R$ such that $x_1\leq x_2\leq x_3$ and $y_1\leq y_2\leq y_3$, it holds that $(y_2-y_1)/(x_2-x_1) \geq (y_3-y_2)/(x_3-x_2)$. For a fixed target threshold $t_Y$, the concavity of ROC curves is essentially equivalent to the conditional event probability (CEP) of the forecast-observation pair $(X,Y)\in\bbR^2$ associated with the ROC curve,
\begin{equation*}
    \mathrm{CEP}(t):= \Pr(Y>t_Y\mid X=t),
\end{equation*}
being non-decreasing with $t$ \citep[Theorems~2~and~3]{GneitingVogel2022}. This equivalence provides an alternative motivation for ROC curves to be concave since a non-decreasing CEP justifies thresholding the forecast \citep{Dimitriadis2024}. Indeed, it implies that the event ``$Y>t_Y$'' has a higher conditional probability as the forecast value $X$ increases and motivates reducing a forecast $X$ to a hard classifier based on a decision threshold $t$.\\

In practice, the postprocessed forecast values, $\hat{x}_i$, leading to the concave hull of a raw ROC diagnostic (i.e., the best achievable ROC curve) can be obtained using the Pool-Adjacent-Violators (PAV) algorithm (\cite{Gneiting2023}; Algorithm~\ref{alg:PAV} in Appendix~\ref{appendix:PAV}).

The PAV-transformed forecast $\hat{X}$ satisfies 
\begin{equation}\label{eq:calibration}
    \mathrm{CEP}(t) := \widehat{\Pr}(Y>t_Y\mid\hat{X}=t) = t,
\end{equation}
for all $t\in\{\hat{x}_1,\ldots,\hat{x}_n\}\subset[0,1]$, where $\widehat{\Pr}$ is the empirical distribution of $(\hat{x}_1,y_1),\dots,(\hat{x}_n,y_n)$ and $(\hat{X},Y)$ is a random draw from $\widehat{\Pr}$. Thus, it is calibrated as Eq.~\eqref{eq:calibration} corresponds to the unified notion of calibration for probabilistic forecasts of binary outcomes \citep{Gneiting2013}. PAV-transformed forecasts can be thought of as probabilistic forecasts for binary outcomes as they take values in $[0,1]$ (see Remark~\ref{remark:prob-forecast}). Moreover, their associated ROC curve is concave since the identity function is non-decreasing. We illustrate how the PAV-transformed forecast modifies the ROC curve in Figure~\ref{fig:PAV-ROC_PR}. The resulting curve is the concave hull of the raw ROC curve. Note that the PAV transformation does not affect concave ROC curves \citep[Theorem~1]{Fawcett2007}.\\

\begin{figure}
    \centering
    \includegraphics[width=0.8\linewidth]{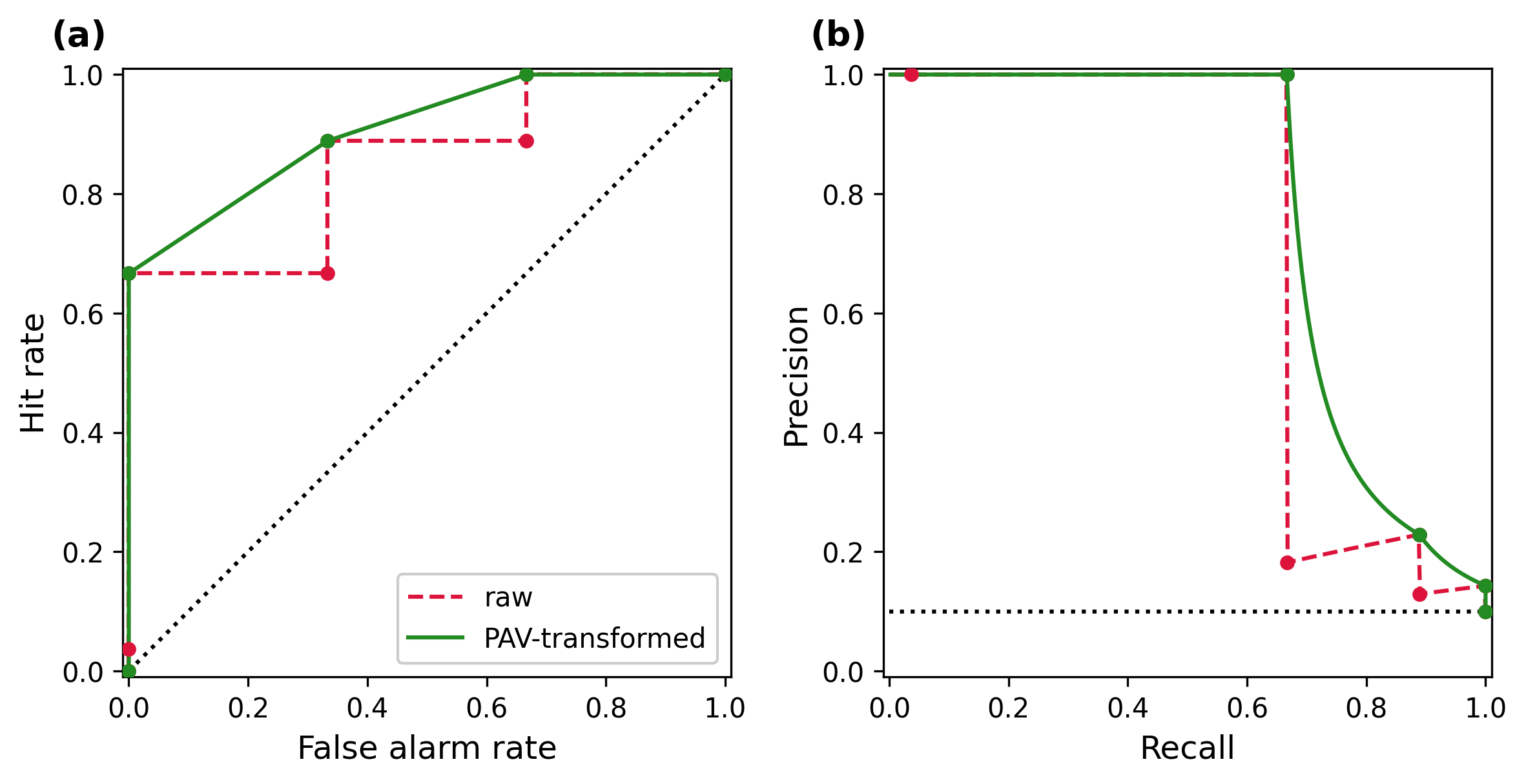}%
    \caption{Illustration of the effect of the PAV algorithm on ROC (a) and PR curves (b). The dotted black lines represent the no-skill line in ROC and PR space, respectively.}
    \label{fig:PAV-ROC_PR}
\end{figure}

A ROC curve assesses the potential predictive ability of a forecast. If curves do not cross, they can be used to rank competing forecasts. In this case, we speak of \textit{dominance}.

\begin{definition}
    Let $R_A,R_B\in\mathcal{R}$ be ROC curves. We say that $R_A$ \textit{dominates} $R_B$ if, for all $x\in[0,1]$,
    $R_A(x)\geq R_B(x)$, and there is at least one $x^\ast\in[0,1]$ such that the inequality is strict.
\end{definition}
This dominance between ROC curves can be used to induce a notion of dominance between forecasts. Let $X_A$ and $X_B$ be forecasts and $Y$ an observation. We denote $\hat{R}_A\in\mathcal{R}$ and $\hat{R}_B\in\mathcal{R}$ the concave ROC curves associated with $X_A$ and $X_B$, respectively. Forecast $X_A$ \textit{dominates} forecast $X_B$ \textit{in terms of ROC curves} if $\hat{R}_A$ dominates $\hat{R}_B$.

Dominance can be defined for any two distinct non-crossing ROC curves. However, it is possible that there is dominance between the concave ROC curve of one forecast and another, without having dominance between their raw ROC curves. To avoid confusion, we define the dominance between two competing forecasts in terms of ROC curves as relying on their concave ROC curves (i.e., the concave hull of their raw ROC curve).\\

The above definitions are deliberately related to the representation of ROC curves as functions of the FAR. However, these definitions are not restrictive and can be given equivalently in other parameterizations (e.g., in terms of FB, as in \cite{Dimitriadis2024}). Appendix~\ref{appendix:dominance-equiv} shows the equivalence with another definition of dominance between ROC curves not as functions of FAR.

\begin{remark}\label{remark:prob-forecast}
    \citet{Dimitriadis2024} showed the strong relationship between ROC curves and Murphy curves of probabilistic forecasts for binary outcomes. A \textit{Murphy curve} plots the value of elementary scores, which are scoring functions consistent for the mean and parameterized by a decision threshold \citep{Ehm2016}. When comparing two calibrated forecasts, dominance in terms of ROC curves and in terms of Murphy curves are equivalent, and, moreover, the number of crossings of the ROC curves equals the number of crossings of the Murphy curves \citep{Dimitriadis2024}.
\end{remark}

\subsection{Precision-Recall (PR)}

We recall the concept of precision-recall (PR) curves. Let the \textit{raw PR diagnostic} be the union of the points of the form $(\Rec(t_X), \Pre(t_X))$, where $t_X\in\bar{\bbR}$. 

In contrast to its ROC counterpart, the \textit{(raw) PR curve} cannot be obtained by linearly interpolating points from the raw PR diagnostic, as such points are not necessarily achievable \citep{Davis2006}. The interpolation between two points $(\Rec(t_X),\Pre(t_X))$ and $(\Rec(t_X'),\Pre(t_X'))$ in PR space is obtained by investigating their random mixture (see Remark~\ref{remark:random-mixture}). When $\Rec(t_X)\leq\Rec(t_X')$ and $\Pre(t_X)\geq\Pre(t_X')$, the interpolation should be done in the following way
\begin{equation}\label{eq:interp-PR}
    \begin{cases}
        \Pre(t_X'')=\cfrac{p\cdot\Rec(t_X'')}{p\cdot\Rec(t_X'')+(1-p)\cdot\left(\FAR(t_X)+\frac{\FAR(t_X')-\FAR(t_X)}{\HR(t_X')-\HR(t_X)}\cdot(\Rec(t_X'')-\HR(t_X))\right)} & \text{if } \Rec(t_X')\neq \Rec(t_X)\\
        \Pre(t_X'')=\cfrac{p\cdot\Rec(t_X)}{p\cdot\Rec(t_X)+(1-p)\cdot\FAR(t_X)} & \text{otherwise}
    \end{cases},
\end{equation}
with $\Rec(t_X'')=\alpha\Rec(t_X)+(1-\alpha)\Rec(t_X')$ and $t_X''=\alpha t_X+(1-\alpha)t_X'$, where $(\FAR(x),\HR(x))$ is the point in ROC space associated with $(\Rec(x),\Pre(x))$ using \eqref{eq:Re-pFH}~and~\eqref{eq:Pre-pFH}, with $x\in\{t_X,t_X'\}$. However, when $\Pre(t_X)<\Pre(t_X')$ or $\Rec(t_X)=\Rec(t_X')$ (as for the vertical and increasing parts of the raw PR curve in Figure~\ref{fig:PAV-ROC_PR}), the interpolation reduces to linear interpolation.\\

Given a fixed sample size $n$, there is a one-to-one correspondence between the ROC curve and the PR curve associated with the same forecast-observation pair, except for the point with $\Rec=\HR=0$ \citep[Theorem~3.1]{Davis2006}. Despite this correspondence, the class of PR curves and its desirable properties have a less elegant representation than that of its ROC counterpart. PR curves can be represented as functions $P:[0,1]\to[0,1]$ that are left-continuous and satisfy $P(1)=p$, where $p$ is the base rate; we denote by $\mathcal{P}$ the class of PR curves. They are not necessarily monotone (see Figure~\ref{fig:PAV-ROC_PR}), and the limit of precision as the recall approaches $0$ does not satisfy strict boundary conditions. As for ROC curves, a PR curve can be obtained from a family of contingency tables, allowing for parameterizations beyond those based on the decision threshold $t_X$ and recall.\\

Concavity is not a desirable property for PR curves. This is not related to the interpolation \eqref{eq:interp-PR} as PR diagnostics of PAV-transformed forecasts are not concave (see Figure~\ref{fig:PAV-ROC_PR}). The counterpart of the concave hull of a ROC curve is the \textit{achievable PR curve} \citep{Davis2006}. We consider a forecast $X$ and observations $Y$. We denote by $R$ and $P$ the raw ROC curve and the raw PR curve, respectively, associated with it. The achievable PR curve of $X$, $\hat{P}$, is the PR curve associated with the concave hull of $R$. Equivalently, it is the PR curve associated with the PAV-transformed forecast values. The achievable PR curve should be used when assessing discrimination ability.\\

Analogously to ROC curves, PR curves can be compared and used to derive a notion of \textit{dominance} between forecasts.

\begin{definition}
     Let $P_A, P_B \in \mathcal{P}$ be PR curves. We say that $P_A$ \textit{dominates} $P_B$ if, for all $x\in[0,1]$, $P_A(x)\geq P_B(x)$, and there is at least one $x^\ast\in[0,1]$ such that the inequality is strict.
\end{definition}
This dominance between PR curves induces a notion of dominance between forecasts. Let $X_A$ and $X_B$ be forecasts and $Y$ an observation. We denote by $\hat{P}_A\in\mathcal{P}$ and $\hat{P}_B\in\mathcal{P}$ the achievable PR curves associated with $X_A$ and $X_B$, respectively. Forecast $X_A$ \textit{dominates} forecast $X_B$ \textit{in terms of PR curves} if $\hat{P}_A$ dominates $\hat{P}_B$. Dominance can be defined for any two distinct non-crossing PR curves. However, dominance in terms of PR curves between forecasts relies on their achievable PR curves.\\

Dominance in terms of ROC curves and dominance in terms of PR curves are equivalent. The following theorem is a direct corollary of \citet[Theorem~3.2]{Davis2006}.
\begin{theorem}\label{thm:equiv-dominance}
    Let $X_A$ and $X_B$ be two forecasts and $Y$ be an observation. $X_A$ dominates $X_B$ in terms of ROC curves if and only if $X_A$ dominates $X_B$ in terms of PR curves.
\end{theorem}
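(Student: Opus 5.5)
The plan is to reduce the statement to the pointwise correspondence between ROC space and PR space given by \eqref{eq:Re-pFH} and \eqref{eq:Pre-pFH}, applied to the concave ROC curves $\hat R_A,\hat R_B$ and the achievable PR curves $\hat P_A,\hat P_B$. Both forecasts are evaluated against the same observations $Y$, so they share the same base rate $p\in(0,1)$. By definition, $\hat P_A$ is the PR curve associated with $\hat R_A$, and likewise for $B$. The key identity is that a point $(\FAR,\HR)$ with $\HR>0$ maps to $(\Rec,\Pre)=(\HR,\, p\HR/(p\HR+(1-p)\FAR))$. For fixed recall $r>0$, the precision is a strictly decreasing function of $\FAR$.

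The steps are as follows. First, because $\hat R_A,\hat R_B$ are concave and non-decreasing, I will describe each curve at hit rate $r\in(0,1]$ by the smallest false alarm rate $F_A(r)=\inf\{x:\hat R_A(x)\ge r\}$, and similarly $F_B(r)$. Under the interpolation \eqref{eq:interp-PR}, which is exactly the image of linear interpolation in ROC space, this gives $\hat P_A(r)=pr/(pr+(1-p)F_A(r))$. Second, I will show that $\hat R_A\ge \hat R_B$ pointwise is equivalent to $F_A(r)\le F_B(r)$ for all $r\in(0,1]$. This is the usual generalized-inverse argument: the two curves are ordered as graphs in the unit square, so ordering in the vertical direction is equivalent to ordering in the horizontal direction. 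Third, monotonicity of the precision formula in $\FAR$ converts $F_A\le F_B$ into $\hat P_A\ge\hat P_B$ on $(0,1]$. Fourth, I will transfer strictness. If $\hat R_A(x^\ast)>\hat R_B(x^\ast)$, the graphs differ, so there is some $r$ with $F_A(r)<F_B(r)$, and the strict monotonicity gives strict inequality of precision. The converse direction runs the same chain backwards, since every step is an equivalence. This is essentially the content of \citet[Theorem~3.2]{Davis2006}, which I would cite for the core "curve dominates in ROC space iff it dominates in PR space" claim. The work here is only to check that its hypotheses hold for the concave hull and achievable curves.

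The main obstacle is the edge behaviour. At recall $0$ the PR curve is not determined by the ROC curve, because the point with $\Rec=0$ is excluded in \citet[Theorem~3.1]{Davis2006}. I would handle this by using left-continuity of curves in $\mathcal{P}$ to define the value at $0$ as a limit, or else by noting that dominance is unaffected at that single point. Vertical segments of $\hat R$ at $x=0$, and the matching of a strict ROC inequality at some $x^\ast$ with a recall strictly in $(0,1]$, also need care. For these I would first try to argue with the region under each curve. Dominance of ROC curves means the hypograph of $\hat R_A$ strictly contains that of $\hat R_B$. The bijection $(\FAR,\HR)\mapsto(\Rec,\Pre)$ on $\HR>0$ maps hypographs to the corresponding regions under the PR curves, and it preserves strict containment.
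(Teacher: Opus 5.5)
Your proposal is correct and follows the paper's route: the paper gives no argument beyond calling the statement a direct corollary of \citet[Theorem~3.2]{Davis2006}, which applies because both forecasts share the same observations (hence the same base rate) and the achievable PR curve is by definition the image of the concave ROC curve. That is exactly the reduction you make, and your generalized-inverse sketch just spells out that theorem's content; the one small fix is that the precision at recall $0$ should be defined as the right limit $r\downarrow 0$ (left-continuity says nothing at $0$), which exists because $F(r)/r$ is monotone for concave $\hat R$.
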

\citet{Krueger2021} define \textit{forecast dominance} as having a better expected score for all scoring functions consistent with the mean functional, and provide generic conditions for it. In the case of calibrated forecasts for binary outcomes \eqref{eq:calibration}, this definition is equivalent to dominance in terms of Murphy curves and, thus, to dominance in terms of ROC or PR curves.\\

\begin{remark}
    The dominance between ROC curves yields the only objective ranking of ROC curves. Any other ordering is subjective as it assumes an end-user. In particular, any total ordering (i.e., ensuring that any two ROC curves are comparable) is subjective and end-user dependent. For example, PR curves have been summarized using a particular point of the PR curve such as the point of highest Symmetric Extreme Dependence Index \citep[SEDI;][]{Ferro2011}; see \citet{Lam2023} for application in AI-based weather forecasting. The area under the ROC curve (AUC) is the most popular univariate summary based on the entire curve, but it also implies a subjective ranking; see, for example, \citet{Atger2001} for further discussion.
\end{remark}

\section{Spatial aggregation of ROC and PR curves}\label{sec:spatial-agg}

\subsection{Parameterization strategies}\label{subsec:agg-param-strat}

We consider ROC and PR curves over $d$ spatial locations. Within this setting, let these curves be associated with families of contingency tables with coefficients denoted $(a_s(t), b_s(t),c_s(t),d_s(t))$ at location $s=1,\dots,d$, and parameterized by $t\in\bar{\bbR}$. 

To aggregate multiple curves across locations, we need a unified parametrization. Using the same parameter value for all locations is not sensible, since the forecast distributions may differ across locations. Similarly, the observed event distribution might also vary across locations. When aggregating ROC or PR curves, one must jointly define the parameters across locations $(t_{1}, \dots, t_{d})\in\bar{\bbR}^d$, and, moreover, how they jointly vary. This is specified by a \textit{parameterization strategy}.
\begin{definition}
    A \textit{parameterization strategy} is a mapping $\gamma: [0,1] \to \bar{\bbR}^d$ that is non-decreasing in each component, and satisfies the boundary conditions $\gamma(0)=(-\infty,\dots,-\infty)$ and $\gamma(1)=(\infty,\dots,\infty)$.
\end{definition} 
Without loss of generality, the domain for the parameters can be any compact and convex subset of $\bar{\bbR}^d$ and, in that case, the boundary conditions should match the parameters' range boundaries. For a common parameter $u\in[0,1]$, the parameter value at location $s=1,\dots,d$ is $t_s(u)=\gamma_s(u)$, where $\gamma_s(u)$ is the $s$-th component of $\gamma(u)$. We provide two examples of parameterization strategies from the literature, both of which are based on decision thresholds. 
\begin{example}{(GraphCast's parameterization strategy)}\label{example:GC}
    To aggregate PR curves in the context of AI-based weather forecasting, \citet{Lam2023} rescale forecasts by
    \begin{equation*}
        X'_s = \mathrm{med}_s + g(u) \cdot (X_s-\mathrm{med}_s),
    \end{equation*}
    where $X_s$ is the original forecast at location $s$, $g(u)\in[0,+\infty)$ is a gain parameter, and $\mathrm{med}_s$ is the climatological median at location $s$ for the observation. Varying the gain parameter $g(u)$ as a function of $u\in[0,1]$ for a fixed thresholding on the rescaled forecast (i.e., $X'_s>t_{Y,s}$) is equivalent to using the parameterization strategy
    \begin{equation}\label{eq:thresholding-strategy-GC}
            \gamma_s(u) = \begin{cases}
                -\infty & \text{if } u=0,\\
                \mathrm{med}_s + \frac{1}{g(u)} (t_{Y,s}-\mathrm{med}_s) & \text{if } u\in(0,1),\\
                \infty & \text{if } u=1,
            \end{cases}
    \end{equation}
    where $g:(0,1)\to(0,\infty)$ is a strictly monotone bijection. The target threshold $t_{Y,s}$ is taken to be the 98th climatological percentile at location $s$ for the observation. Hereafter, we refer to this parameterization strategy as \textit{GraphCast's parameterization strategy}.
\end{example}

\begin{example}{(Location-scale parameterization strategy)}\label{example:Zhang}
    \citet{Zhang2026} propose a different parameterization strategy motivated by assuming probabilistic forecasts from a location-scale family. It corresponds to the following rescaling
    \begin{equation*}
        X_s' = X_s + g(u) \cdot \sigma_{X,s},
    \end{equation*}
    where $X_s$ is the original forecast at location $s$, $g(u)\in\bbR$ is a gain parameter, and $\sigma_{X,s}$ is the climatological standard deviation of the forecast at location $s$. Varying the gain parameter for a fixed threshold on the rescaled forecast is equivalent to varying the threshold with the parameterization strategy
    \begin{equation}\label{eq:thresholding-strategy-ZZ}
            \gamma_s(u) = t_{Y,s} - g(u)\cdot \sigma_{X,s},
    \end{equation}
    where $g:[0,1]\to\bar{\bbR}$ is a strictly monotone bijection. If $X_s$ is from a location and scale family with location parameter $\mu_{X,s}$ and scale parameter $\sigma_{X,s}$ (i.e., $X_s=\mu_{X,s}+\sigma_{X,s} X$, where $X$ follows the standard distribution of the family), then $X_s>\gamma_s(u)$ becomes
    \begin{equation*}
        X>\frac{t_{Y,s}-\mu_{X,s}}{\sigma_{X,s}}+g(u),
    \end{equation*}
    where $X$ follows the standard distribution of the location-scale family, and, if the target thresholds are taken to be such that $t_{Y,s}=\mu_{X,s}+\sigma_{X,s}\cdot t$ (with $t\in\bbR$), this corresponds to taking the decision thresholds to be quantiles of the forecast distribution at all locations. As \citet{Zhang2026} focuses on record-breaking events, the target thresholds are set to the calendar-month record over the training period at each location. Hereafter, we refer to this parameterization strategy as the \textit{location-scale parameterization strategy}.
\end{example}

At each location $s=1,\dots,d$, the FAR, the HR or recall, and the precision associated with the forecast are denoted with a subscript and defined using the coefficients of the contingency table corresponding to the parameter $\gamma_s(u)$.

In addition to a parameterization strategy, one needs to define an \textit{aggregation strategy} specifying how location-wise quantities are aggregated.

\subsection{Aggregating interpolated counts}\label{subsec:agg-interp-counts}

A common way to aggregate is at the level of the contingency tables.  At each location, the family of contingency tables is continuous. An aggregated contingency table is obtained by summing location-wise components. Then, the desired quantities are computed to obtain an aggregated ROC or PR curve. For a given parameterization strategy $\gamma$ and $u\in[0,1]$, the aggregated FAR, HR or recall, and precision are defined as 
\begin{align}
    \FAR(\gamma(u)) &= \frac{\sum_{s=1}^d b_s(\gamma_s(u))}{\sum_{s=1}^d b_s(\gamma_s(u))+d_s(\gamma_s(u))};\label{eq:FAR-interp-counts}\\
    \HR(\gamma(u)) = \Rec(\gamma(u)) &= \frac{\sum_{s=1}^d a_s(\gamma_s(u))}{\sum_{s=1}^d a_s(\gamma_s(u))+c_s(\gamma_s(u))};\label{eq:HR_Re-interp-counts}\\
    \Pre(\gamma(u)) &= \frac{\sum_{s=1}^d a_s(\gamma_s(u))}{\sum_{s=1}^d a_s(\gamma_s(u))+b_s(\gamma_s(u))}.\label{eq:Pre-interp-counts}
\end{align}

\citet{Lam2023} and \citet{Zhang2026} use an aggregation strategy based on aggregating counts along with GraphCast's parameterization strategy \eqref{eq:thresholding-strategy-GC} and the location-scale parameterization strategy \eqref{eq:thresholding-strategy-ZZ}, respectively, to compute spatially aggregated PR curves. If, instead of aggregating families of continuous contingency tables, we aggregate families of contingency tables that take only positive integer values, this corresponds to aggregating counts.\\

Aggregating counts and aggregating interpolated counts rely on the assumption that the events of interest have the same meaning across locations. Since the counts are not weighted when aggregated, a count should have the same importance in terms of impact or rarity despite its location. This implies a uniformity of the impact over the spatial domain considered. Thus, the choice of target thresholds and of the locations across which the aggregation is done must be done carefully. 

\subsection{Preservation of dominance}\label{subsec:preservation-dom}

We investigate desirable properties when aggregating ROC or PR curves and the related conditions on parameterization strategies. The first desirable property when aggregating ROC or PR curves is that dominance is preserved. 

\begin{definition}
    \label{def:preservation-dominance}
    Let $A$ and $B$ be two forecasters issuing forecasts across $d$ locations. Let their location-wise ROC (PR) curves be associated with families of contingency tables. Let $\gamma_A$ and $\gamma_B$ be parameterization strategies for forecasters $A$ and $B$, respectively. An aggregation strategy \textit{preserves dominance in terms of ROC (PR) curves} if the aggregated ROC (PR) curve of $A$ dominates the aggregated ROC (PR) curve of $B$, where the aggregated ROC (PR) curves are composed of points of the following form 
    \begin{equation*}
        \left(\FAR^M(\gamma_M(u)), \HR^M(\gamma_M(u))\right)
    \end{equation*}
    \begin{equation*}
        \left(\text{resp. }\left(\Rec^M(\gamma_M(u)),\Pre^M(\gamma_M(u))\right)\right),
    \end{equation*}
    where $u\in[0,1]$ and $M=A,B$.
\end{definition}
When aggregating ROC curves, we obtain the following sufficient condition on the pair of parameterization strategies.

\begin{theorem}\label{thm:preservation-dominance-ROC}
    Let $A$ and $B$ be two forecasters issuing forecasts across $d$ locations. Let their location-wise ROC curves be associated with families of contingency tables. Let $\gamma_A$ and $\gamma_B$ be parameterization strategies for $A$ and $B$, respectively, such that, for all $s=1,\dots,d$ and  $u\in[0,1]$,
    \begin{numcases}{}
        \FAR^A_s(\gamma_{A,s}(u))\leq \FAR^B_s(\gamma_{B,s}(u))\label{eq:locationwise-ineq-FAR}\\
        \HR^A_s(\gamma_{A,s}(u))\geq \HR^B_s(\gamma_{B,s}(u))\label{eq:locationwise-ineq-HR}
    \end{numcases}
     and there exists at least one $u^\ast\in[0,1]$ such that one of the inequality is strict. Then, aggregating interpolated counts preserves dominance in terms of ROC curves.
\end{theorem}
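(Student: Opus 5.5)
The key observation is that the location-wise denominators $a_s+c_s$ and $b_s+d_s$ are constant in the parameter and do not depend on the forecaster. The aggregated rates are therefore fixed convex combinations of the location-wise rates. Write $n^+_s=a_s+c_s$ and $n^-_s=b_s+d_s$, and set the weights $w^+_s=n^+_s/\sum_r n^+_r$ and $w^-_s=n^-_s/\sum_r n^-_r$. From \eqref{eq:FAR-interp-counts} and \eqref{eq:HR_Re-interp-counts}, for $M=A,B$,
\begin{equation*}
\FAR^M(\gamma_M(u))=\sum_{s=1}^d w^-_s\,\FAR^M_s(\gamma_{M,s}(u)),\qquad
\HR^M(\gamma_M(u))=\sum_{s=1}^d w^+_s\,\HR^M_s(\gamma_{M,s}(u)).
\end{equation*}
The weights depend only on the observations, not on the forecaster. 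Summing \eqref{eq:locationwise-ineq-FAR} and \eqref{eq:locationwise-ineq-HR} with these nonnegative weights then gives, for every $u$, $\FAR^A(u)\le \FAR^B(u)$ and $\HR^A(u)\ge\HR^B(u)$. At $u^\ast$ one of these inequalities is strict, provided the location where the local inequality is strict carries positive weight. I will assume this, i.e. each location has both events and non-events, or at least the relevant one.

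Next, pointwise parametric comparison must be converted into dominance of the aggregated curves as functions of the FAR. Each aggregated curve is the image of $u\mapsto(\FAR^M(u),\HR^M(u))$. Both coordinates are monotone in $u$, since each $\gamma_{M,s}$ is monotone and each local family has $a_s,b_s$ moving in the same direction. The endpoints are $(0,0)$ and $(1,1)$ by the boundary conditions, and the curve is continuous because interpolated counts are continuous. Hence the curve is a continuous monotone path from $(0,0)$ to $(1,1)$, and it defines $R_M(x)=\sup\{\HR^M(u):\FAR^M(u)\le x\}$, which is consistent with the right-continuous convention of $\mathcal R$. Fix $x\in[0,1]$ and pick $u$ with $\FAR^B(u)\le x$ and $\HR^B(u)$ close to $R_B(x)$. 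Then $\FAR^A(u)\le\FAR^B(u)\le x$, so $R_A(x)\ge \HR^A(u)\ge \HR^B(u)$, and therefore $R_A\ge R_B$ everywhere. This is the route of Appendix~\ref{appendix:dominance-equiv}, which I may invoke directly if it states exactly this parametric characterisation of dominance.

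The main obstacle is the strictness part. Pointwise strictness at one parameter value $u^\ast$ does not immediately give a strict gap between $R_A$ and $R_B$ at some abscissa, because the point of $A$ could slide along a segment of $B$'s curve. For example, $\FAR^A(u^\ast)<\FAR^B(u^\ast)$ with equal HR might land $A$'s point exactly on a flat part of $R_B$. I would handle this by a case analysis at $u^\ast$.

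\begin{enumerate}[(i)]
\item If $\HR^A(u^\ast)>\HR^B(u^\ast)$, take $x=\FAR^B(u^\ast)$. Using monotonicity of $\HR^B$ in $u$ together with continuity, I would show $R_B(x)$ is attained at a parameter $u'\ge u^\ast$ for which $\HR^A(u')\ge\HR^A(u^\ast)$. I would then try to derive strictness from the strict inequality at $u^\ast$; this step is delicate.
\item If only the FAR inequality is strict, compare at $x=\FAR^A(u^\ast)$. There $R_A(x)\ge\HR^A(u^\ast)\ge \HR^B(u^\ast)$, while $R_B(x)\le\HR^B(u^\ast)$, since $B$ reaches that FAR no later than $u^\ast$.
\end{enumerate}

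In both cases, equality $R_A=R_B$ would force the two paths to coincide as sets despite differing at $u^\ast$. I would argue that this contradicts the hypotheses holding for all $u$. If the argument fails in a degenerate flat-segment situation, I would fall back on interpreting dominance as the two aggregated point sets being distinct and non-crossing, which is what the hypothesis delivers directly.
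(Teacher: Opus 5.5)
Your first step is the paper's entire proof. The paper clears the denominators $a_s+c_s$ and $b_s+d_s$, which depend neither on the forecaster nor on the parameter, sums the count inequalities $b^A_s\le b^B_s$ and $a^A_s\ge a^B_s$ over $s$, and divides by the totals. It then ends with the pointwise comparison of aggregated $\FAR$ and $\HR$ at every $u$, adding only that the argument ``remains valid'' at $u^\ast$. Your positive-weight caveat is automatic: a strict inequality between rates at location $s$ presupposes a nonzero denominator there, so the corresponding count inequality is strict and survives the summation. Your second step, deducing $R_A\ge R_B$ from $R_M(x)=\sup\{\HR^M(u):\FAR^M(u)\le x\}$ (with $\FAR^M(u)$ short for $\FAR^M(\gamma_M(u))$), is correct. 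It needs no concavity, so you do not need Lemma~\ref{lemma:dominance-equiv-ROC}, which assumes it, and it makes explicit what the paper leaves implicit. One caveat: $\gamma$ need not be continuous. Interpolating both paths synchronously across jumps preserves the pointwise comparison, because quadrants are convex.

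The genuine gap is strictness, and your proposed way out does not work. Your case (i) is left open, case (ii) only produces weak inequalities, and $R_A=R_B$ does \emph{not} contradict the hypotheses. Here is a counterexample.
Take $d=1$ and give $A$ and $B$ the same family $b(t)=n^-t$, $a(t)=n^+\min(1,2t)$, $t\in[0,1]$. Both curves are then $R(x)=\min(1,2x)$, which is, for instance, the concave ROC curve of a PAV-transformed forecast whose lowest group contains no events.
Let $\gamma_B(u)=u$, and let $\gamma_A(u)=u$ for $u\le\tfrac12$ and $\gamma_A(u)=\tfrac12+2(u-\tfrac12)^2$ for $u\ge\tfrac12$.
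Both hit rates equal $1$ on $[\tfrac12,1]$, while $\FAR^A=\gamma_A(u)<u=\FAR^B$ on $(\tfrac12,1)$. So \eqref{eq:locationwise-ineq-FAR}--\eqref{eq:locationwise-ineq-HR} hold with strictness at $u^\ast=\tfrac34$, yet the aggregated curves coincide and neither dominates. This is exactly your ``sliding along a flat segment'' scenario, and such top horizontal segments are common for rare events.

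The paper does not close this either. It stops at the parametric comparison and tacitly identifies it with dominance, as in Lemma~\ref{lemma:dominance-equiv-ROC}. That lemma's implication 1)$\Rightarrow$2) fails on this very example; its Case~1 is never finished. So what both your argument and the paper's actually prove is $R_A\ge R_B$ plus parametric strictness, which is your fallback notion.

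To obtain strict dominance in the sense of the definition, you need a nondegeneracy hypothesis. For example, assume $P=(\FAR^A(u^\ast),\HR^A(u^\ast))$ does not lie on $B$'s aggregated curve; then $x^\ast=\FAR^A(u^\ast)$ works.
If $\FAR^A(u^\ast)<\FAR^B(u^\ast)$, then $R_B(x^\ast)\le\HR^B(u^\ast)\le\HR^A(u^\ast)\le R_A(x^\ast)$, and equality would place $P$ on a horizontal piece of $B$'s curve.
If the false alarm rates agree and the hit-rate inequality is strict, then $R_B(x^\ast)\ge\HR^A(u^\ast)$ would place $P$ on a vertical piece of $B$'s curve.
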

The situation is more delicate for PR curves.
\begin{theorem}\label{thm:preservation-dominance-PR}
    Let $A$ and $B$ be two forecasters issuing forecasts across $d$ locations. Let their location-wise PR curves be associated with families of contingency tables. Let $\gamma_A$ and $\gamma_B$ be parameterization strategies for $A$ and $B$, respectively, satisfying 
    \begin{numcases}{}
        \Rec^A_s(\gamma_{A,s}(u))\geq \Rec^B_s(\gamma_{B,s}(u))\label{eq:locationwise-ineq-Re}\\
        \Pre^A_s(\gamma_{A,s}(u))\geq \Pre^B_s(\gamma_{B,s}(u))\label{eq:locationwise-ineq-Pre}\\
        \FB_s^A(\gamma_{A,s}(u))=\FB_s^B(\gamma_{B,s}(u))\label{eq:locwise-FB-eq}
    \end{numcases}
    and there exists at least one $u^\ast\in[0,1]$ such that one of the inequalities is strict, for all $s=1,\dots,d$. Then, aggregating interpolated counts preserves dominance in terms of PR curves.
\end{theorem}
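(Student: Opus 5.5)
Fix $u\in[0,1]$ and write $a^M_s=a^M_s(\gamma_{M,s}(u))$, and similarly for $b^M_s$, for $M=A,B$. Since the observations are common to both forecasters, the positives $n^+_s:=a^M_s+c^M_s$ and negatives $n^-_s:=b^M_s+d^M_s$ at location $s$ do not depend on $M$ or $u$. The plan is to show that, at every common parameter $u$, the aggregated point of $A$ lies weakly to the right of and above that of $B$ in PR space, and strictly for some $u$. The key role of \eqref{eq:locwise-FB-eq} is to convert the precision inequality into a hit-count inequality. Indeed, $\FB^M_s=(a^M_s+b^M_s)/n^+_s$, so equal frequency bias gives a common number of forecast positives $m_s:=a^A_s+b^A_s=a^B_s+b^B_s$. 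Then $\Pre^M_s=a^M_s/m_s$, and \eqref{eq:locationwise-ineq-Pre} yields $a^A_s\ge a^B_s$ whenever $m_s>0$. The same conclusion follows directly from \eqref{eq:locationwise-ineq-Re} with $n^+_s$ fixed, which also covers the case $m_s=0$, where $a^A_s=a^B_s=0$.

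Next, aggregate. By \eqref{eq:HR_Re-interp-counts}, $\Rec^M(\gamma_M(u))=\sum_s a^M_s/\sum_s n^+_s$, so $\Rec^A\ge\Rec^B$ pointwise in $u$. By \eqref{eq:Pre-interp-counts}, $\Pre^M(\gamma_M(u))=\sum_s a^M_s/\sum_s m_s$, and the denominator is common to $A$ and $B$, so $\Pre^A\ge\Pre^B$ pointwise in $u$ too. This is exactly where the FB condition is indispensable: without it the denominators differ, and a ratio of sums need not preserve componentwise ratio inequalities (Simpson-type effects). For strictness, at $u^\ast$ some location has a strict inequality in recall or precision. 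In either case this forces $a^A_s>a^B_s$ there, because $n^+_s$ and $m_s$ are common. Hence both aggregated recall and aggregated precision are strictly larger at $u^\ast$, provided $\sum_s m_s>0$.

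Finally, translate the parameter-wise statement into dominance of curves as functions of recall; this is the main obstacle. The aggregated PR curve of $B$ is traced by $u\mapsto(\Rec^B(u),\Pre^B(u))$, and we must compare heights at equal recall $r$, not at equal $u$. The approach is to use that, at equal aggregated FB (which holds at each $u$, since $\sum_s m_s$ is common), PR dominance matches ROC-type dominance of the corresponding points. We move to ROC space via \eqref{eq:Re-pFH}--\eqref{eq:Pre-pFH}. Hits increase for $A$ while forecast positives are equal, so false alarms satisfy $b^A_s=m_s-a^A_s\le b^B_s$ for every $s$. Thus \eqref{eq:locationwise-ineq-FAR}--\eqref{eq:locationwise-ineq-HR} hold, and Theorem~\ref{thm:preservation-dominance-ROC} gives dominance of the aggregated ROC curves. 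We then pass to PR curves through the one-to-one correspondence of \citet[Theorem~3.2]{Davis2006}, as in Theorem~\ref{thm:equiv-dominance}. Applied to the aggregated families, whose aggregated base rate is common to both forecasters, this correspondence maps ROC dominance to PR dominance, with interpolation handled by \eqref{eq:interp-PR}. The boundary point with recall $0$ is treated separately by left-continuity.
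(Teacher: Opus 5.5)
Your first two paragraphs are the paper's proof. Equal frequency bias gives a common number of forecast positives $a^M_s+b^M_s$, so the precision (or recall) inequality reduces to $a^A_s(\gamma_{A,s}(u))\ge a^B_s(\gamma_{B,s}(u))$. After summing, the aggregated recall and precision of $A$ and $B$ share their denominators.

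The paper stops there. It shows only that at each common parameter $u$ the aggregated point of $A$ lies weakly above and to the right of that of $B$, and treats this as dominance of the curves without comment. In PR space that step is not automatic. Transferring the argument of Lemma~\ref{lemma:dominance-equiv-ROC} to Lemma~\ref{lemma:dominance-equiv-PR} needs the PR curve of $A$ to be non-increasing in recall. That holds for achievable curves, but not for the aggregate of arbitrary families.

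Your third step is a genuine addition that fills this gap. You use \eqref{eq:locwise-FB-eq} a second time to get $b^A_s\le b^B_s$, which puts $A$'s point weakly above and to the left in ROC space. There, parameter-wise dominance does imply dominance of the curves, because the argument of Lemma~\ref{lemma:dominance-equiv-ROC} uses only that ROC curves are non-decreasing; this is the content of Theorem~\ref{thm:preservation-dominance-ROC}. You then return to PR space via the correspondence behind Theorem~\ref{thm:equiv-dominance}, which applies because the aggregated numbers of positives and negatives are common to $A$ and $B$. Concretely, with the left-continuity convention, $P_M(r)$ is a decreasing function of $R_M^{-}(r)=\inf\{x:R_M(x)\ge r\}$, and $R_A\ge R_B$ if and only if $R_A^{-}\le R_B^{-}$.

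The paper's version is shorter and suffices when the aggregated curves are achievable, as for DCP-FB and DCP-PL via Proposition~\ref{prop:preservation-concavity-sufficient} and Corollary~\ref{cor:preservation-achievability}. Yours gives dominance as functions of recall for arbitrary families, at the cost of invoking the ROC result and the ROC--PR correspondence.

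One small slip: left-continuity gives nothing at recall $0$, a point the ROC--PR correspondence excludes. It needs a separate convention, such as $P_M(0)=\lim_{r\downarrow 0}P_M(r)$, which then inherits the inequality from $r>0$.
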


Eq.~\eqref{eq:locwise-FB-eq} corresponds to a location-wise FB equality. This encapsulates common conditions: central FB and central marginal probabilities. A central FB (regardless of location) is relevant when it is believed that forecasts should be aggregated at decision thresholds corresponding to the same level of rarity relative to the local base rate. A central marginal probability is a stronger condition than that of a central FB, as it is implied by the latter. It corresponds to an interest in events with the same rarity across locations and in aggregating forecasts at thresholds corresponding to unconditional but location-dependent quantiles at the same level. However, it also enforces a central base rate, which is not always desirable. This is the case whenever the definition of the event or its impact is not based on its rarity. For example, since the thresholds for heat warning levels from MeteoSwiss do not depend on location \citep{MeteoSwiss2026}, the base rates of their exceedance differ across Switzerland.

\subsection{Preservation of concavity and achievability}\label{subsec:preservation-concavity-achievability}

The second desirable property of the tuple of parameterization and aggregation strategy we investigate is that concavity is preserved for ROC curves and that achievability is preserved for PR curves. 

\begin{definition}\label{def:preservation-concavity}
    Consider a forecast across $d$ locations and its associated location-wise concave ROC curves. Given a parameterization strategy $\gamma$, an aggregation strategy preserves concavity if the aggregated ROC curve is concave.
\end{definition}

Similarly to the single-location case, the concavity of the aggregated ROC curve is related to the CEP.

\begin{theorem}\label{thm:preservation-concavity}
    Consider a forecaster issuing forecasts across $d$ locations. Let its location-wise concave ROC curves be associated with families of continuous contingency tables. Let $\gamma$ be a parameterization strategy. A necessary and sufficient condition for aggregating interpolated counts to preserve concavity is that the aggregated conditional event probability is non-decreasing in the central parameter $u\in[0,1]$.
\end{theorem}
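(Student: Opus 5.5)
The plan is to reduce the statement to the single-location characterization of concavity by \citet[Theorems~2~and~3]{GneitingVogel2022}, applied to the aggregated family of contingency tables. Write $A(u)=\sum_{s=1}^d a_s(\gamma_s(u))$, $B(u)=\sum_{s=1}^d b_s(\gamma_s(u))$, and similarly $C(u)$ and $D(u)$. Since each location-wise family has constant totals $a_s+c_s$ and $b_s+d_s$, the sums $A+C=:N_1$ and $B+D=:N_0$ are constant in $u$. Since each $\gamma_s$ is non-decreasing, $A$ and $B$ are both monotone in the same direction, with the correct boundary values at $u=0,1$. Hence $\{(A(u),B(u),C(u),D(u))\}_{u\in[0,1]}$ is itself a family of contingency tables in the sense of Section~\ref{sec:background}. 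The aggregated ROC curve is the curve $u\mapsto (B(u)/N_0,\,A(u)/N_1)$ from \eqref{eq:FAR-interp-counts}--\eqref{eq:HR_Re-interp-counts}.

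Next I will define the aggregated conditional event probability as the local ratio of increments of hits to increments of forecast positives along the central parameter:
\[
\mathrm{CEP}(u)=\lim_{h\to 0}\frac{A(u+h)-A(u)}{A(u+h)-A(u)+B(u+h)-B(u)},
\]
that is, $\mathrm{CEP}=\frac{dA}{d(A+B)}$ as a Radon--Nikodym derivative with respect to $u$. This is the pooled probability that the event occurs among the cases that switch classification at parameter $u$. It agrees with $\Pr(Y>t_Y\mid X=t)$ in the single-location case. Writing $\mathrm{CEP}=\frac{N_1\,d\HR}{N_1\,d\HR+N_0\,d\FAR}$, it is a strictly monotone function of the slope $\frac{d\HR}{d\FAR}$ of the aggregated curve at the point indexed by $u$. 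Concretely, $\mathrm{CEP}=\big(1+\frac{N_0}{N_1}\,(d\HR/d\FAR)^{-1}\big)^{-1}$, which is increasing in the slope.

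For the equivalence, note the orientation. As $u$ increases, $\gamma_s(u)$ increases, so $\FAR$ and $\HR$ decrease: moving in $u$ traverses the curve from $(1,1)$ towards $(0,0)$. A curve is concave exactly when its slope, in the sense of the three-point chord condition, is non-increasing in $\FAR$, equivalently non-decreasing in $u$. By the monotone link above, this is equivalent to $\mathrm{CEP}(u)$ being non-decreasing in $u$.

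\begin{itemize}
\item For sufficiency, I take three points indexed by $u_1\ge u_2\ge u_3$. Each chord slope is an average of local slopes over the corresponding $u$-interval, weighted by $d\FAR$. Monotonicity of the local slope then gives the chord inequality in the definition of concavity.
\item For necessity, a decrease of $\mathrm{CEP}$ on some pair of intervals yields two consecutive chords violating the inequality.
\end{itemize}

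The main obstacle is rigor where the local slope is undefined: flat pieces where $A$ and $B$ are constant, jumps, and vertical or horizontal segments. I will handle these by working with the chord-ratio form $\frac{\Delta A}{\Delta A+\Delta B}$ over intervals rather than derivatives. Non-decreasing $\mathrm{CEP}$ is then interpreted as these interval ratios being non-decreasing for adjacent intervals, mirroring how \citet{GneitingVogel2022} treat general, possibly discrete, distributions. Intervals where $\Delta(A+B)=0$ contribute no points and can be ignored.

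The location-wise concavity hypothesis is not actually used in the equivalence. It only guarantees that each location-wise $\mathrm{CEP}_s$ is non-decreasing in its own parameter. The aggregated $\mathrm{CEP}$ is a mixture of these with $u$-dependent weights $d(a_s+b_s)(\gamma_s(u))$, which is why the condition is genuinely on the aggregate.
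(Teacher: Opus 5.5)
Your proposal is correct and matches the paper's argument. The paper also treats the aggregated family of contingency tables as that of a single virtual forecast--observation pair and applies the Gneiting--Vogel equivalence between concavity and a non-decreasing CEP; your Radon--Nikodym/chord-ratio argument makes explicit what the paper simply cites, while the paper additionally reduces to location-wise PAV-transformed (calibrated) forecasts to obtain the explicit form $\mathrm{CEP}(u)=\sum_s \widehat{\Pr}_s(\hat X_s=\gamma_s(u))\,\gamma_s(u)/\sum_s \widehat{\Pr}_s(\hat X_s=\gamma_s(u))$, which the equivalence does not need but Proposition~\ref{prop:preservation-concavity-sufficient} uses and which agrees with your closing mixture remark.
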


The aggregated conditional event probability is the CEP of the family of aggregated contingency tables (see Appendix~\ref{appendix:aggregated-CEP}). Similarly, a desirable property when aggregating PR curves is that the aggregation of achievable PR curves yields an achievable PR curve.

\begin{definition}\label{def:preservation-achievability}
    Consider a forecast across $d$ locations and its associated location-wise achievable PR curves. Given a parameterization strategy $\gamma$, an aggregation strategy \textit{preserves achievability} if the aggregated PR curve is an achievable PR curve.
\end{definition}

Since the PR curve associated with a concave ROC curve is an achievable PR curve, the following result is a direct corollary of Theorem~\ref{thm:preservation-concavity}.

\begin{corollary}\label{cor:preservation-achievability}
    Consider a forecaster issuing forecasts across $d$ locations. Let its location-wise achievable PR curves be associated with families of continuous contingency tables. Let $\gamma$ be a parameterization strategy. A necessary and sufficient condition for aggregating interpolated counts to preserve achievability is that the aggregated CEP is non-decreasing in the central parameter $u$.
\end{corollary}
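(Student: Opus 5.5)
The plan is to reduce Corollary~\ref{cor:preservation-achievability} to Theorem~\ref{thm:preservation-concavity} using the one-to-one correspondence between ROC and PR curves \citep[Theorem~3.1]{Davis2006}. I will apply it to the aggregated family of contingency tables, not location-wise. The key observation is that the aggregated ROC curve and the aggregated PR curve are built from the same family
\[
\Bigl\{\bigl(\textstyle\sum_s a_s(\gamma_s(u)),\sum_s b_s(\gamma_s(u)),\sum_s c_s(\gamma_s(u)),\sum_s d_s(\gamma_s(u))\bigr),\ u\in[0,1]\Bigr\}.
\]
This is itself a family of contingency tables in the sense of Section~\ref{sec:background}. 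Its total and its number of observed events $\sum_s(a_s+c_s)$ are constant in $u$. Since each $\gamma_s$ is non-decreasing and each location-wise family is monotone, the sums $\sum_s a_s$ and $\sum_s b_s$ are monotone and run between $0$ and their maxima. Hence the aggregated base rate $p$ is fixed. By \eqref{eq:Re-pFH}--\eqref{eq:Pre-pFH}, each aggregated PR point is the image of the aggregated ROC point with the same $u$.

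First, I rewrite the hypothesis. At each location the achievable PR curve is, by definition, the PR curve associated with a concave ROC curve, namely the concave hull of the raw ROC curve. The continuous contingency tables generating the achievable PR curve at location $s$ therefore generate a concave ROC curve at location $s$. So the hypothesis of Theorem~\ref{thm:preservation-concavity} holds with the same families and the same $\gamma$.

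Second, I show that the aggregated PR curve is achievable if and only if the aggregated ROC curve is concave. If the aggregated ROC curve is concave, it equals its own concave hull (PAV leaves concave curves unchanged, \citealp[Theorem~1]{Fawcett2007}). Its PR image is then achievable by definition. Conversely, suppose the aggregated PR curve is achievable, i.e.\ it coincides with the PR image of some concave ROC curve with the same base rate. Injectivity of the ROC--PR map for fixed $p$ and $n$ then forces the aggregated ROC curve to equal that concave curve, with the point at recall $0$ handled by the boundary $R(0)=0$. Combining this with Theorem~\ref{thm:preservation-concavity}, preservation of achievability holds if and only if the aggregated CEP is non-decreasing in $u$.

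The main obstacle is the converse in the second step. Achievability of a PR curve has to be read as a property of the curve itself, and the interpolation \eqref{eq:interp-PR} between PR points has to match linear interpolation in ROC space. I would handle this by noting that \eqref{eq:interp-PR} was derived precisely as the PR image of linear ROC interpolation (Remark~\ref{remark:random-mixture}), so the correspondence of \citet{Davis2006} extends from points to whole continuous curves. This may need a short argument about the degenerate recall-$0$ endpoint.
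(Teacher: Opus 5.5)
Your proposal is correct and follows the paper's route. The paper proves the corollary as a direct application of Theorem~\ref{thm:preservation-concavity}, combined with the fact that a family of contingency tables has an achievable PR curve if and only if it has a concave ROC curve; your two steps are exactly this, with extra detail the paper leaves implicit (the aggregated family is a valid family with fixed base rate, and the converse direction follows from injectivity of the ROC--PR map).
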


\subsection{Examples}\label{subsec:examples}

\subsubsection{Aggregating counts}\label{subsubsec:aggregating-counts}

The results of Theorems~\ref{thm:preservation-dominance-ROC}, \ref{thm:preservation-dominance-PR}, and \ref{thm:preservation-concavity} and Corollary~\ref{cor:preservation-achievability} remain valid when aggregating counts. However, there are cases in which no parameterization strategies for two forecasts can aggregate counts while simultaneously preserving dominance and concavity. We consider the case of two forecasts, A and B, across two locations. The base rate at location 1 is $p_1=0.9$, and the base rate at location 2 is $p_2=0.1$. Forecasts A and B are characterized by their location-wise ROC curve, represented in Figure~\ref{fig:ce-agg-counts}(a,b).

\begin{figure}
    \centering
    \includegraphics[width=\linewidth]{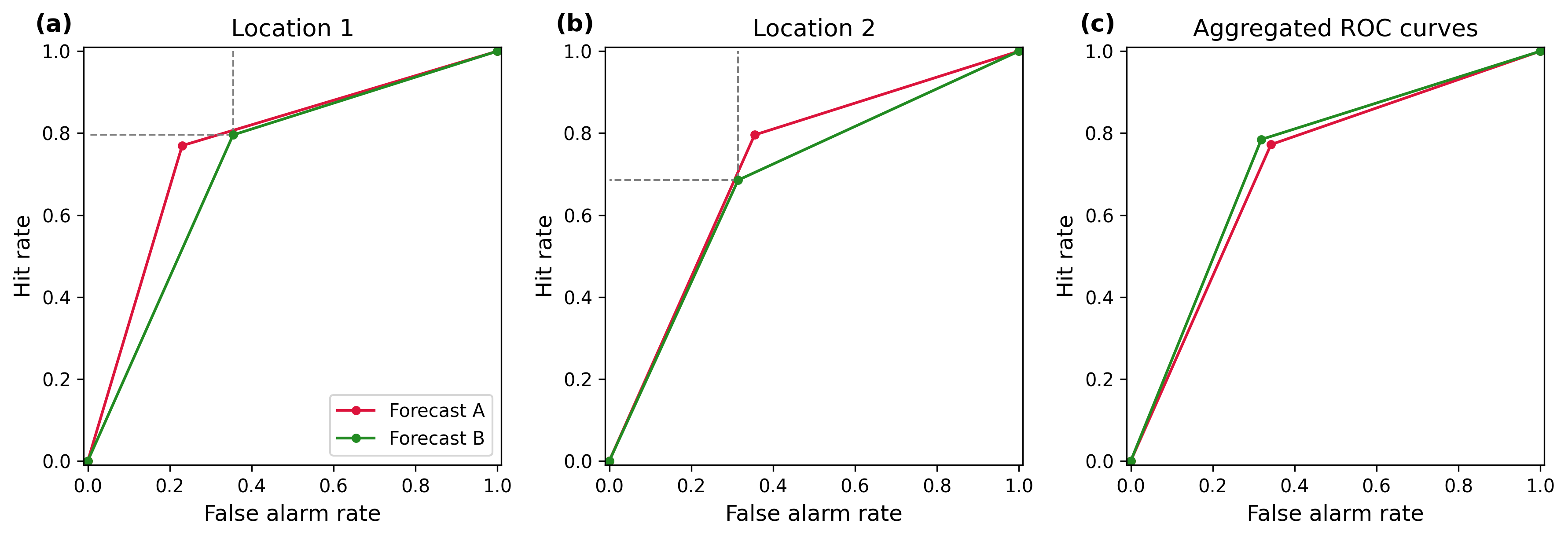}
    \caption{ROC curves of forecasts A and B at location 1 (a), at location 2 (b), and aggregated across both locations (c).}
    \label{fig:ce-agg-counts}
\end{figure}

In this example, the conditions \eqref{eq:locationwise-ineq-FAR} and \eqref{eq:locationwise-ineq-HR} of Theorem~\ref{thm:preservation-dominance-ROC} cannot be satisfied since, at both locations, not all the points of the ROC curve of forecast B (green circles) have a point of the ROC curve of forecast A (red circles) in their upper-left corner (see the grey dotted lines in Figure~\ref{fig:ce-agg-counts}(a,b)). This is not an issue per se since Theorem~\ref{thm:preservation-dominance-ROC} only provides a sufficient condition for the preservation of dominance.\\

This simple setting allows us to compute all the possible combinations of parameterization strategy (not shown). Among all possible parameterization strategies, only one for each forecast preserves concavity. Figure~\ref{fig:ce-agg-counts}(c) shows the comparison of the aggregated ROC curves associated with these strategies. Forecast B dominates that of Forecast A in terms of ROC curves, indicating that, in this case, no pair of parameterization strategies can preserve both concavity and dominance.\\

Note that, in empirical settings, it is not always possible to ensure a location-wise FB equality \eqref{eq:locwise-FB-eq} since there might be multiple forecasts with the same value for one forecaster but not necessarily for the other.

\subsubsection{Aggregating interpolated counts}

We focus on ROC curves, but similar results can be obtained for PR curves due to their equivalence. This section aims to study whether parameterizations exist that preserve both dominance and concavity when paired with the aggregation of interpolated counts. We start by looking at concavity. To preserve concavity, we first need to ensure it at each location. Thus, without loss of generality, we consider ROC curves obtained from ROC diagnostics of location-wise PAV-transformed forecast-observation pairs, and we let these curves be parameterized by $t\in[0,1]$. As the condition provided by Theorem~\ref{thm:preservation-concavity} is general and difficult to grasp, we propose a concrete special case.
\begin{proposition}\label{prop:preservation-concavity-sufficient}
    Consider a location-wise PAV-transformed forecast across $d$ locations and its associated location-wise concave ROC curves. Let $\gamma$ be a parameterization strategy such that for all $u\in[0,1]$, $\gamma_s(u)=f(u)$, where $f:[0,1]\to [0,1]$ is an increasing bijection. Then, the aggregated ROC curve based on aggregating interpolated counts is concave.
\end{proposition}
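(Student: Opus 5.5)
The plan is to verify the criterion of Theorem~\ref{thm:preservation-concavity}: the aggregated CEP must be non-decreasing in the central parameter $u$. Because $\gamma_s(u)=f(u)$ for all $s$ and $f$ is an increasing bijection, I would reparameterize by $t=f(u)\in[0,1]$. Monotonicity in $u$ is then equivalent to monotonicity in $t$. So it suffices to show that the CEP of the family of summed contingency tables, indexed by a common PAV-value threshold $t$, is non-decreasing in $t$.

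First I will describe the local structure. After the location-wise PAV transformation, the forecast $\hat{X}_s$ at location $s$ takes values in $[0,1]$ and is calibrated in the sense of \eqref{eq:calibration}. The forecast observations with $\hat{X}_s=v$ therefore have event frequency exactly $v$. Thresholding at $t$ gives $a_s(t)=\sum_{v>t} v\, n_s(v)$ and $b_s(t)=\sum_{v>t}(1-v)\,n_s(v)$, where $n_s(v)$ is the number of forecasts equal to $v$ at location $s$. Between jumps, the interpolated (continuous) family moves linearly along the corresponding segment.

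Summing over $s$, the aggregated counts are $a(t)=\sum_s a_s(t)$ and $b(t)=\sum_s b_s(t)$. The aggregated CEP at parameter $t$ is the local slope ratio $\mathrm{d}a/(\mathrm{d}a+\mathrm{d}b)$, as defined in Appendix~\ref{appendix:aggregated-CEP}. The key observation is that any mass removed from the positive predictions when crossing the level $v$ has event proportion $v$ at every location simultaneously. This is exactly the same thing as the pooled sample, thresholded at $t$, being calibrated with forecast value $v$. Hence the aggregated CEP equals the identity: wherever it is defined, $\mathrm{CEP}_{\mathrm{agg}}(t)=t$, which is non-decreasing. The interpolation segments inherit this, since along a segment generated by value $v$ at several locations the increments of $a$ and $b$ stay in ratio $v:(1-v)$. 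Theorem~\ref{thm:preservation-concavity} then gives concavity of the aggregated ROC curve. Alternatively, I could argue directly: the aggregated ROC curve is the ROC curve of the pooled calibrated forecast, and it is concave by \citet[Theorems~2 and~3]{GneitingVogel2022}.

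I expect the main obstacle to be the bookkeeping for the interpolated counts. The location-wise families are continuous, and a single PAV value may occur at several locations while different values occur elsewhere. I must make sure that the common parameterization $t$ traverses the jumps of all locations in order of the value $v$. This holds because the jumps sit at $t=v$ identically at every location. I also need the slopes to compose correctly: two segments from different locations at the same $v$ add to a segment of slope $v/(1-v)$ in $(b,a)$ coordinates, which is again consistent. Degenerate cases also need care, namely $v\in\{0,1\}$ and locations without events. I would handle these by noting that they contribute purely vertical or horizontal pieces at the ends, which preserve the ordering of slopes.
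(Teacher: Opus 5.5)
Your proposal is correct and follows essentially the same route as the paper. The paper's proof substitutes $\gamma_s(u)=f(u)$ into the calibrated expression \eqref{eq:aggregated-CEP} to get $\mathrm{CEP}(u)=f(u)$, which is non-decreasing, and then invokes Theorem~\ref{thm:preservation-concavity}; this is exactly your identity $\mathrm{CEP}_{\mathrm{agg}}(t)=t$ after reparameterizing by $t=f(u)$, and your extra bookkeeping on interpolation segments and degenerate values $v\in\{0,1\}$ only fills in details the paper leaves implicit.
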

In particular, the parameterization strategy $\gamma_s(u)=u$ for all $u\in[0,1]$ preserves concavity.

We now investigate the preservation of dominance for aggregating interpolated counts. Given two forecasters $A$ and $B$ across $d$ locations with location-wise concave ROC curves, we study conditions on their parameterization strategies. At each location, we assume that their parameterization is non-decreasing with FAR and HR. For a location $s=1,\dots,d$, given a parameterization strategy $\gamma_A$, the conditions of Theorem~\ref{thm:preservation-dominance-ROC} become
\begin{equation}\label{eq:bounds-param-strat}
    \FAR_s^{B\ (-1)}(\FAR_s^A(\gamma_{A,s}(u))) \leq \gamma_{B,s}(u) \leq \HR_s^{B\ (-1)}(\HR_s^A(\gamma_{A,s}(u)))
\end{equation}
for all $u\in[0,1]$, if the parameterizations of both ROC curves are non-decreasing in FAR and HR. Moreover, there exists $u^\star\in[0,1]$ such that one of the inequalities is strict. Parameterization with a different direction for monotonicity, such as one based on decision thresholds, will lead to an inversion between the lower and upper bounds.

There always exists such a pair of parameterization strategies. Moreover, certain parameterizations of ROC curves satisfy these conditions directly; this is the case when the ROC curves are parameterized by (rescaled) FB or using parallel downward lines in ROC space. Figure~\ref{fig:iso-FB-ROC} illustrates how lines of constant FB span the ROC space.

\begin{figure}
    \centering
    \includegraphics[width=0.5\linewidth]{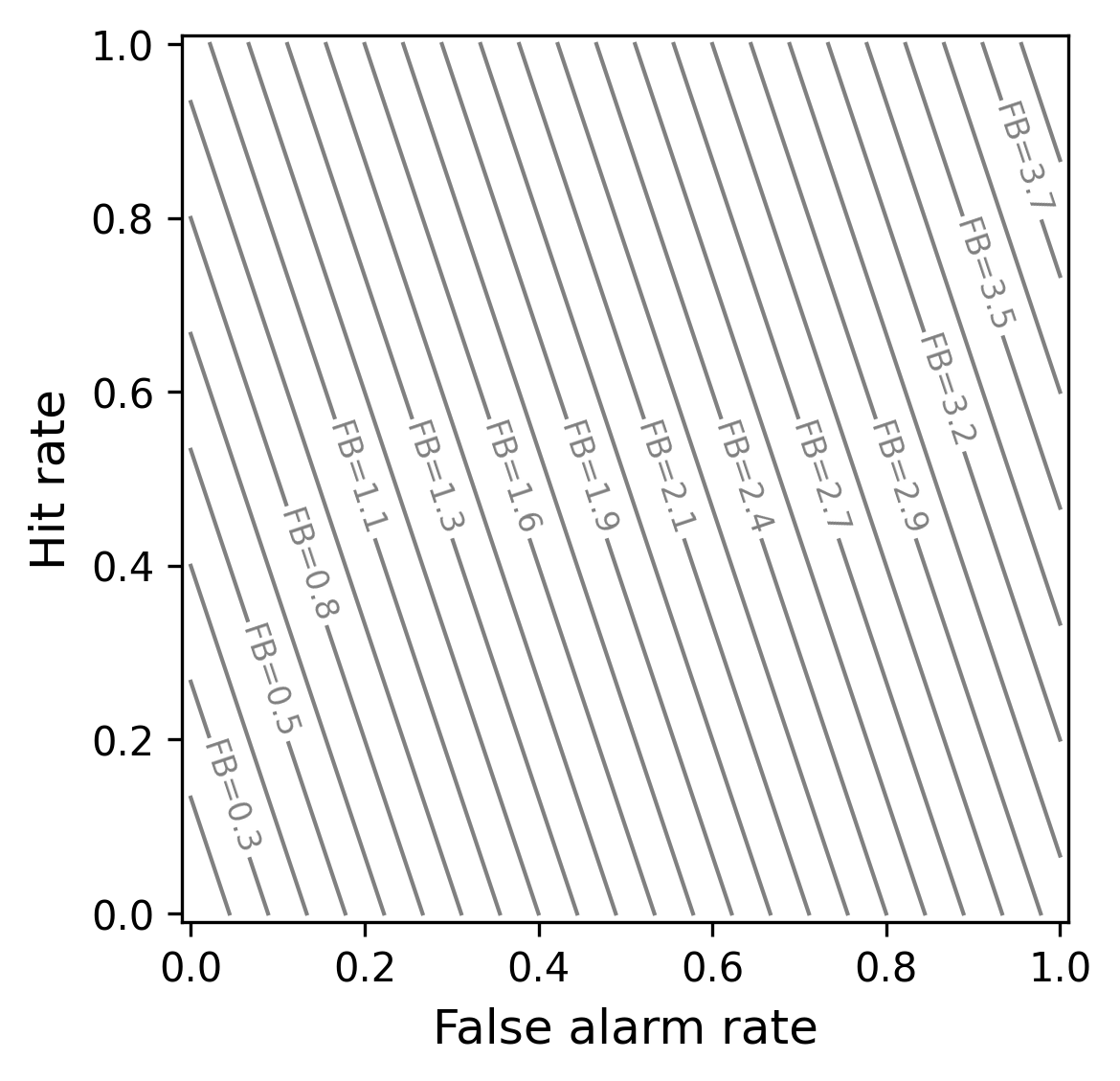}
    \caption{Lines of constant frequency bias in ROC space for a base rate $p=0.2$.}
    \label{fig:iso-FB-ROC}
\end{figure}

In order to simultaneously ensure the preservation of dominance and concavity, we consider ROC curves of two forecasters obtained from ROC diagnostics of location-wise PAV-transformed forecast-observation pairs, and we let these curves be parameterized by a parameter taking values in $[0,1]$. In particular, we assume that these parameterizations are such that $\gamma_{A,s}(u)=u$ and $\gamma_{B,s}(u)=u$ satisfy \eqref{eq:bounds-param-strat} for all $u\in[0,1]$ and $s=1,\dots,d$. Moreover, as a consequence of Proposition~\ref{prop:preservation-concavity-sufficient}, such a pair of parameterization strategies also ensures the preservation of concavity.\\

Thus, the two aforementioned parameterization strategies based on ROC curves being parameterized by FB and based on parallel downward lines with a slope of $-1$ preserve both dominance and concavity. These are referred to as dominance- and concavity-preserving aggregation with FB parametrization (DCP-FB) or with parallel lines parametrization (DCP-PL).

\begin{remark}
    All the above results are stated for the entire ROC and PR curves. However, similar results can be obtained if only subparts of the curves are of interest, provided that the subparts of different curves (at different locations or associated with different forecasts) correspond to decision and target thresholds that satisfy the conditions of the corresponding results.
\end{remark}

Instead of aggregating counts using GraphCast's or the location-scale parameterization strategies, we recommend aggregating interpolated counts and using the family of parameterization strategies presented above, as we illustrate in the following section.

\section{Applications}\label{sec:applications}

In this section, we first illustrate how inappropriate parameterization strategies can lead to misleading conclusions and then compare parameterization strategies on AI-based weather prediction (AIWP) and numerical weather prediction (NWP) forecasts from the WeatherBench 2 dataset \citep{Rasp2024}.

\subsection{Counterexample}\label{subsec:counterexample}

The example in Section~\ref{subsubsec:aggregating-counts} shows that simultaneous preservation of concavity and dominance can be impossible when aggregating counts. The following example illustrates how inappropriate parameterization strategies fail to preserve dominance when aggregating interpolated counts.\\

We use a simple theoretical setting. Let $(X_A,Y)$ and $(X_B,Y)$ be two forecast-observation pairs over two locations. We assume these pairs are jointly Gaussian at each location. The forecast $M\in\{A,B\}$ and the observation at location $s\in\{1,2\}$ are jointly defined as
\begin{equation*}
    (X_{M,s},Y_s)\sim\calN\left(\left[\begin{array}{cc}
        0  \\
        0 
    \end{array}\right],\left[\begin{array}{cc}
        \sigma_{M,s}^2 & \rho_M\sigma_{M,s} \\
        \rho_M\sigma_{M,s} & 1
    \end{array}\right]\right),
\end{equation*}
with $\sigma_{A,1}=\sigma_{A,2}=\sigma_{B,1}=1$ and $\sigma_{B,2}=2$, and $\rho_A=0.85>\rho_B=0.8$. We choose $t_{Y,s}$, the target threshold at location $s\in\{1,2\}$, to be the 95th percentile of the observation distribution. Since $\rho_A$ and $\rho_B$ are both positive, all the location-wise ROC (PR) curves are concave (achievable); see Appendix~\ref{appendix:bivariate-gaussian-cep}. Moreover, since $\rho_A>\rho_B$, $X_A$ dominates $X_B$ in terms of ROC and PR curves at both locations (see Figs.\ref{fig:agg-counterexample-PR}(a,b)~and~\ref{fig:agg-counterexample-ROC}(a,b)).

\begin{figure}[ht]
    \centering
    \includegraphics[width=\linewidth]{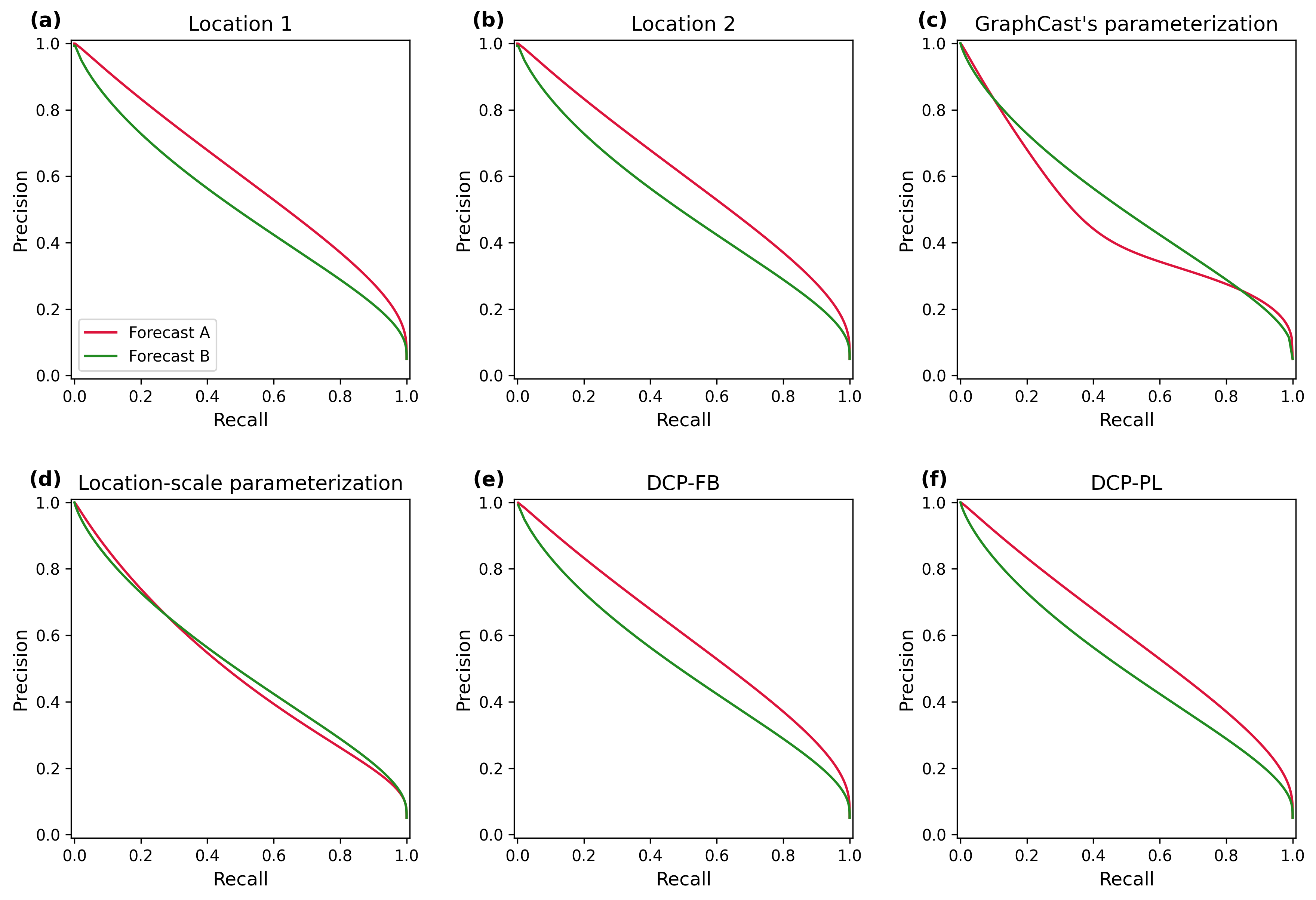}
    \caption{Location-wise PR curves (a,b) and aggregated PR curves using GraphCast's parameterization (c), the location-scale parameterization (d), DCP-FB (e), and DCP-PL (f).}
    \label{fig:agg-counterexample-PR}
\end{figure}

\begin{figure}[ht]
    \centering
    \includegraphics[width=\linewidth]{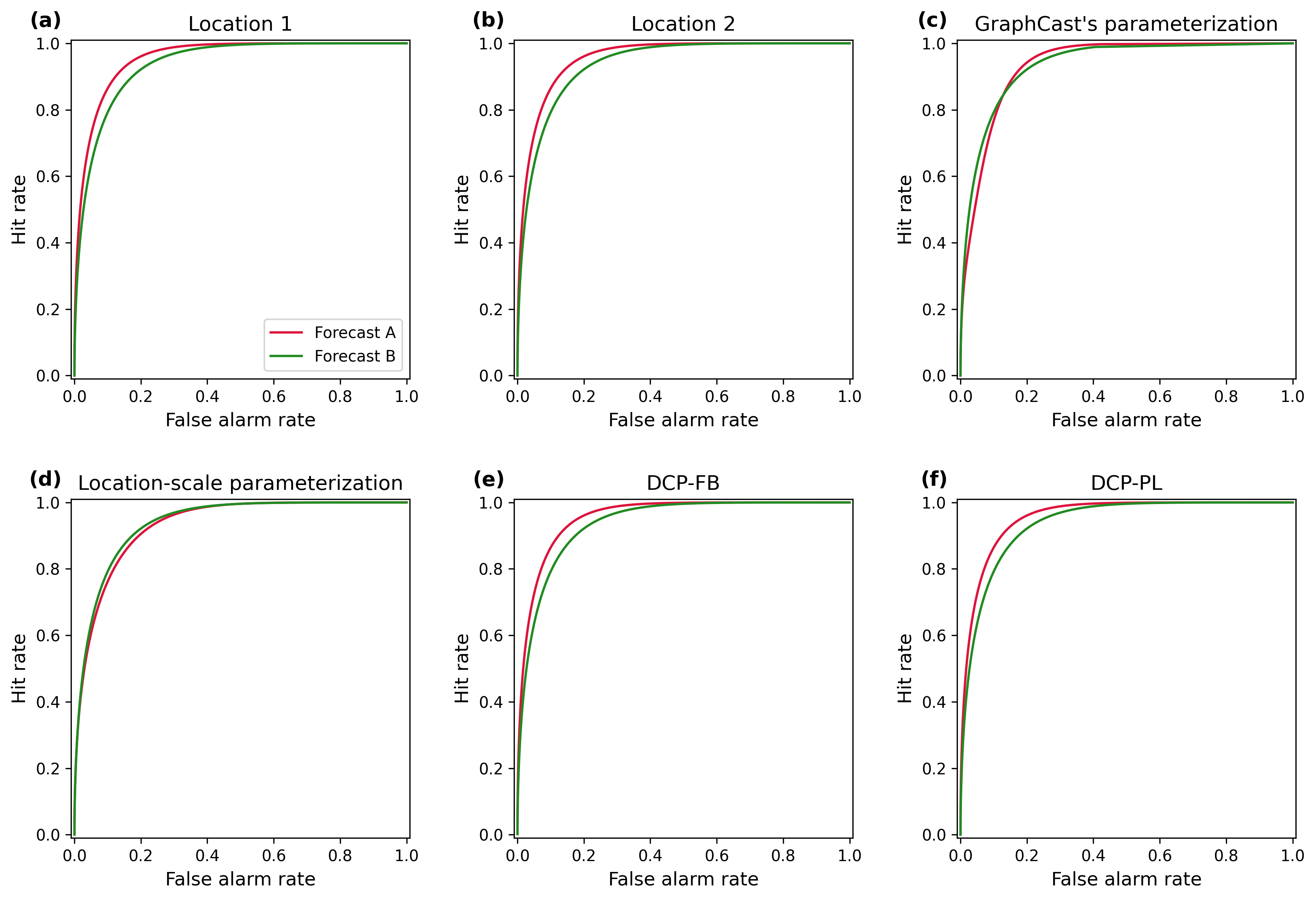}
    \caption{Location-wise ROC curves (a,b) and aggregated ROC curves using GraphCast's parameterization (c), the location-scale parameterization (d), DCP-FB (e), and DCP-PL (f).}
    \label{fig:agg-counterexample-ROC}
\end{figure}

Figures~\ref{fig:agg-counterexample-PR}~and~\ref{fig:agg-counterexample-ROC} show the PR and ROC curves, respectively, at both locations and the aggregated curves obtained using GraphCast's parameterization, the location-scale parameterization, and both DCP-FB and DCP-PL. Note that theoretical quantities have been used (see Appendix~\ref{appendix:bivariate-gaussian-far}), and counts and interpolated counts coincide, as no atom mass is present in the distributions. GraphCast's and the location-scale parameterization strategies fail to preserve dominance, as shown by the crossing of the aggregated curves in panels (c) and (d) of Figs.~\ref{fig:agg-counterexample-PR}~and~\ref{fig:agg-counterexample-ROC}. This is simply due to a difference in scales for $X_B$ at location 2, despite the distributions belonging to a location-scale family.  On the other hand, both DCP-FB and DCP-PL preserve dominance (see Figs.\ref{fig:agg-counterexample-PR}(e,f)~and~\ref{fig:agg-counterexample-ROC}(e,f)). The choice of an appropriate parameterization strategy is crucial, even when aggregating only ROC or PR curves across two locations with the same base rate. Similarly, when the observations' distribution differs at the two locations (i.e., non-stationary), GraphCast's and the location-scale parametrizations still fail to preserve dominance (not shown).

\subsection{WeatherBench 2}

\subsubsection{Dataset description}

We now apply the aforementioned methods to the WeatherBench 2 dataset \citep{Rasp2024}. In particular, we focus on 2m temperature forecasts with lead times ranging from 6 h to 10 days, with a 6 h step initialized at 00:00 and 12:00 UTC daily. To avoid potential comparison issues, we only consider forecasts initialized with the initial conditions of the deterministic configuration (i.e., HRES; \cite{ECMWFdoc}) of the European Centre for Medium-range Weather Forecasting's (ECMWF) Integrated Forecast System (IFS) and use IFS HRES' operational analyses as the ground truth. We consider three models: the ECMWF's IFS HRES; Google DeepMind's GraphCast \citep{Lam2023}; and Huawei's Pangu-Weather \citep{Bi2023}. From now on, IFS HRES, GraphCast, and Pangu-Weather are referred to as HRES, GraphCast, and Pangu, respectively. Forecasts and ground truth are gridded fields with a horizontal resolution of 0.25$^\circ$. All valid times are within the year 2020, avoiding any overlap with the models' training period. Climatological quantities are computed using ERA5 \citep{Hersbach2020} over the period 1979--2019.

\subsubsection{Preservation of dominance}

We now illustrate the preservation of dominance as a key property when aggregating ROC and PR curves using real-world data. We consider two locations, with latitude and longitude (60.25\textdegree, -1.5\textdegree) and (70.25\textdegree, -25.0\textdegree), where HRES dominates GraphCast and Pangu, and GraphCast dominates Pangu in terms of ROC and PR curves (see Figs.~\ref{fig:agg-PR-pair}(a,b) and~\ref{fig:agg-ROC-pair}(a,b)), for a lead time of 2 days and target thresholds corresponding to the 75th climatological percentile for each location, time of day, and month of year. Moreover, the corresponding raw PR and ROC curves satisfy the same dominance relationships.

\begin{figure}[ht]
    \centering
    \includegraphics[width=\linewidth]{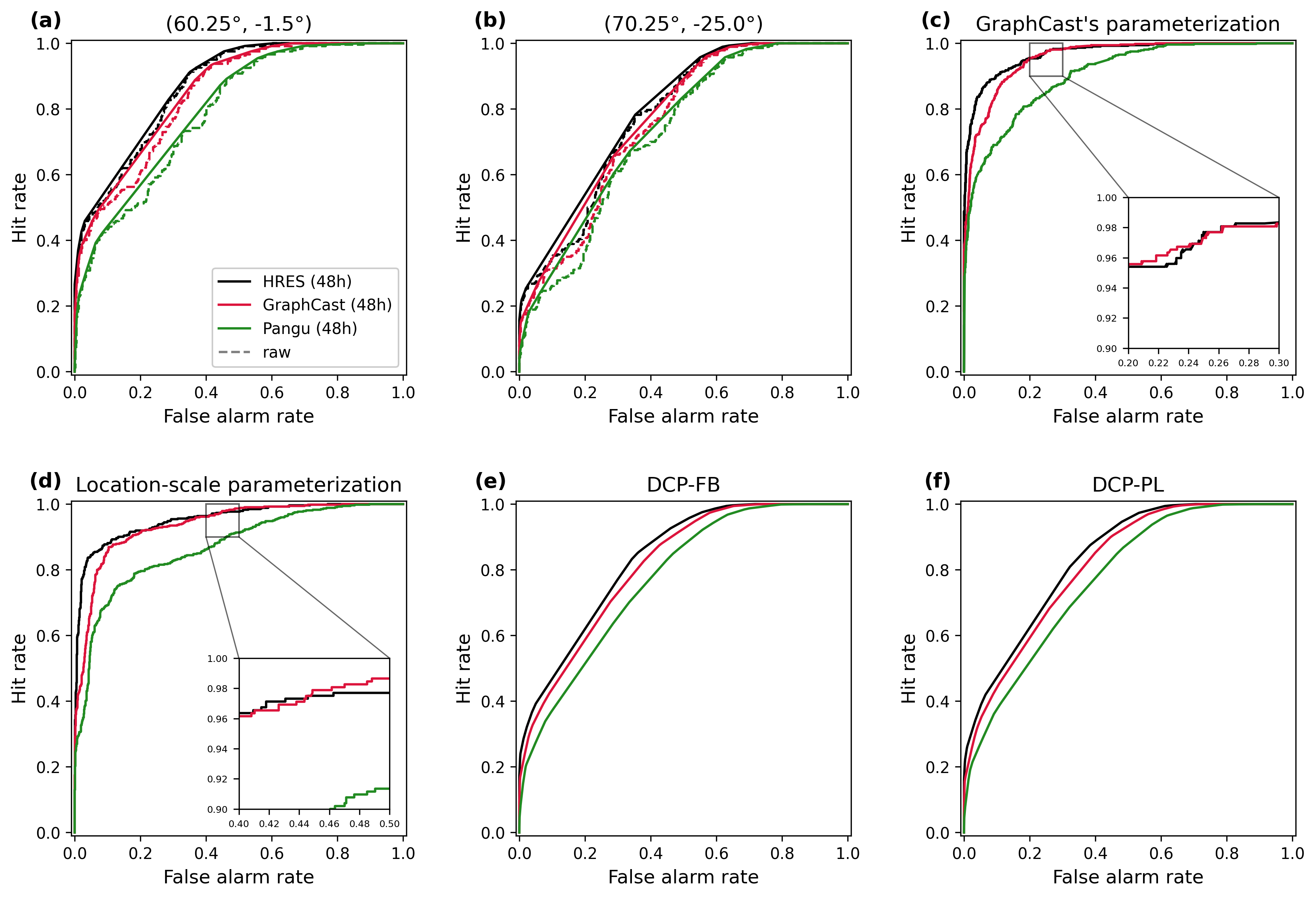}
    \caption{Location-wise raw (dashed lines) and concave (solid) ROC curves (a,b), and aggregated ROC curves using GraphCast's parameterization (c), the location-scale parameterization (d), DCP-FB (e), and DCP-PL (f), for HRES (black), GraphCast (red), and Pangu (green) at a 2-day lead time over (60.25\textdegree, -1.5\textdegree) and (70.25\textdegree, -25.0\textdegree). The target thresholds correspond to the 75th percentiles of ERA5 for each location, time of day, and month of year over 1979-2019.}
    \label{fig:agg-ROC-pair}
\end{figure}

As for the simple counterexample of Section~\ref{subsec:counterexample}, GraphCast's and the location-scale parameterization fail to preserve dominance. In particular, the aggregated PR (ROC) curve based on HRES does not dominate the aggregated PR (ROC) curve based on GraphCast since the curves cross (see Figs.~\ref{fig:agg-PR-pair}(c,d)~and~\ref{fig:agg-ROC-pair}(c,d)). On the other hand, aggregated PR and ROC curves using DCP-FB and DCP-PR preserve the dominance relationships present at the location level (see Figs.~\ref{fig:agg-PR-pair}(e,f)~and~\ref{fig:agg-ROC-pair}(e,f)).

\subsubsection{Extreme temperatures}\label{subsubsec:extreme-temperatures}

We aggregate PR curves across the globe with a focus on extreme temperatures. Once again, we illustrate the impact of the parameterization strategy on the resulting curves. We investigate extreme temperatures using two definitions: exceedances of the 98th climatological percentile, as in \citet{Lam2023}, and record-breaking events relative to the training period, as in \citet{Zhang2026}. However, with both definitions of extreme temperature, there is no location-wise dominance relationship that holds across all locations; thus, eventual dominance relationships among aggregated curves cannot be interpreted as dominance per se.

\paragraph{98th percentile exceedances.}
Following \citet{Lam2023}, the target thresholds correspond to the 98th climatological percentile for each grid point, time of day, and month of the year. Moreover, only grid points over land are considered, and the test period is restricted to summer months (July, August, and September in the Northern hemisphere and December, January, and February in the Southern hemisphere). However, contrary to the original methodology, we rely on ERA5 data from 1979 to 2019 to compute climatological quantities. Figure~\ref{fig:counts-q98} in the Appendix shows the spatial distribution of counts for these target thresholds. The gain ranges used for this worldwide aggregation correspond to those of the original articles (i.e., $g(u)\in[0.8, 4.5]$ and $g(u)\in[-1.5, 1.5]$ for GraphCast's and the location-scale parameterizations, respectively). 

\begin{figure}[ht]
    \centering
    \includegraphics[width=0.7\linewidth]{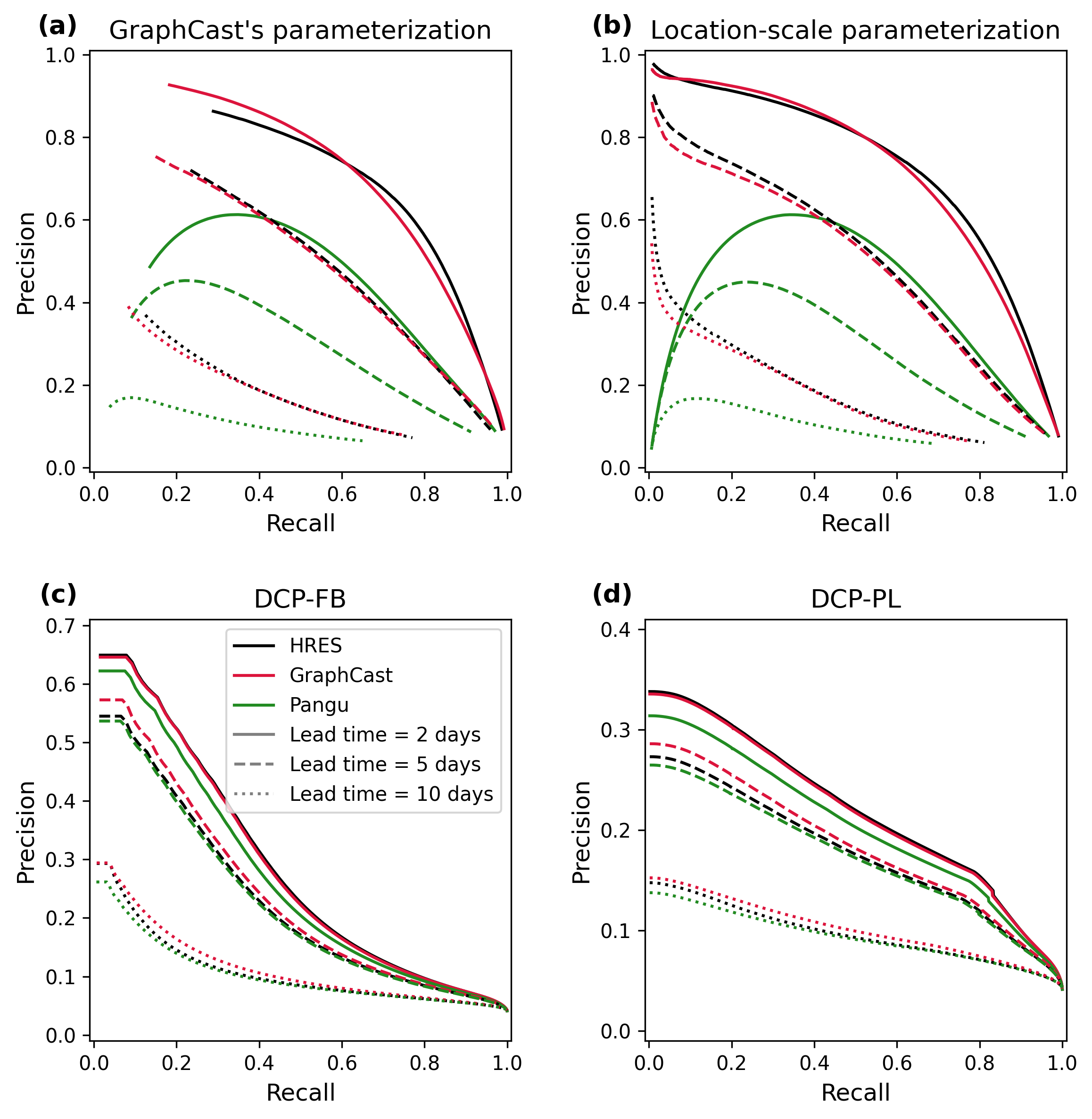}
    \caption{Aggregated PR curves across the globe using GraphCast's parameterization (a), the location-scale parameterization (b), DCP-FB (c), and DCP-PL (d) for HRES (black), GraphCast (red), and Pangu (green) at lead times of 2 days (solid lines), 5 days (dashed lines), and 10 days (dotted lines) over land for summer months, for exceedances of the 98th climatological percentile.}
    \label{fig:agg-PR-q98}
\end{figure}

Figure~\ref{fig:agg-PR-q98} compares forecasts for lead times of 2, 5, and 10 days using GraphCast's parameterization, the location-scale parameterization, DCP-FB, and DCP-PL. First of all, it appears that the different parameterization strategies lead to different ranges of precision values and different decreasing trends of precision as recall increases. GraphCast's and the location-scale parameterization strategies show that Pangu has a notably lower aggregated PR curve than that of HRES and GraphCast for the same lead time. This even leads to the aggregated PR curves of Pangu crossing that of HRES and GraphCast at longer lead times. On the other hand, both DCP-FB and DCP-PL show no crossing between PR curves corresponding to distinct lead times. Nonetheless, Pangu's aggregated PR curves generally appear below those of HRES and GraphCast at the same lead time. DCP-FB and DCP-PL have a smaller range of precision values. This could be due to the fact that, contrary to DCP-FB and DCP-PL, both GraphCast's and the location-scale parameterization strategy can inflate precision values when aggregated, as already visible in Figure~\ref{fig:agg-PR-pair}.

\paragraph{Record-breaking events.}

Following \citet{Zhang2026}, we focus now on target thresholds corresponding to the maximum recorded temperature in the climatology for each grid point and month of the year. Moreover, only grid points over land (excluding Antarctica) are considered. However, contrary to the original methodology, we rely on ERA5 data from 1979 to 2019 to compute climatological quantities, including maximum recorded values. Figure~\ref{fig:counts-max} in the Appendix shows the spatial distribution of counts for these target thresholds. The gain ranges used for this worldwide aggregation correspond to those of the original articles (i.e., $g(u)\in[0.8, 4.5]$ and $g(u)\in[-1.5, 1.5]$  for GraphCast's and the location-scale parameterizations, respectively). 

\begin{figure}[ht]
    \centering
    \includegraphics[width=0.7\linewidth]{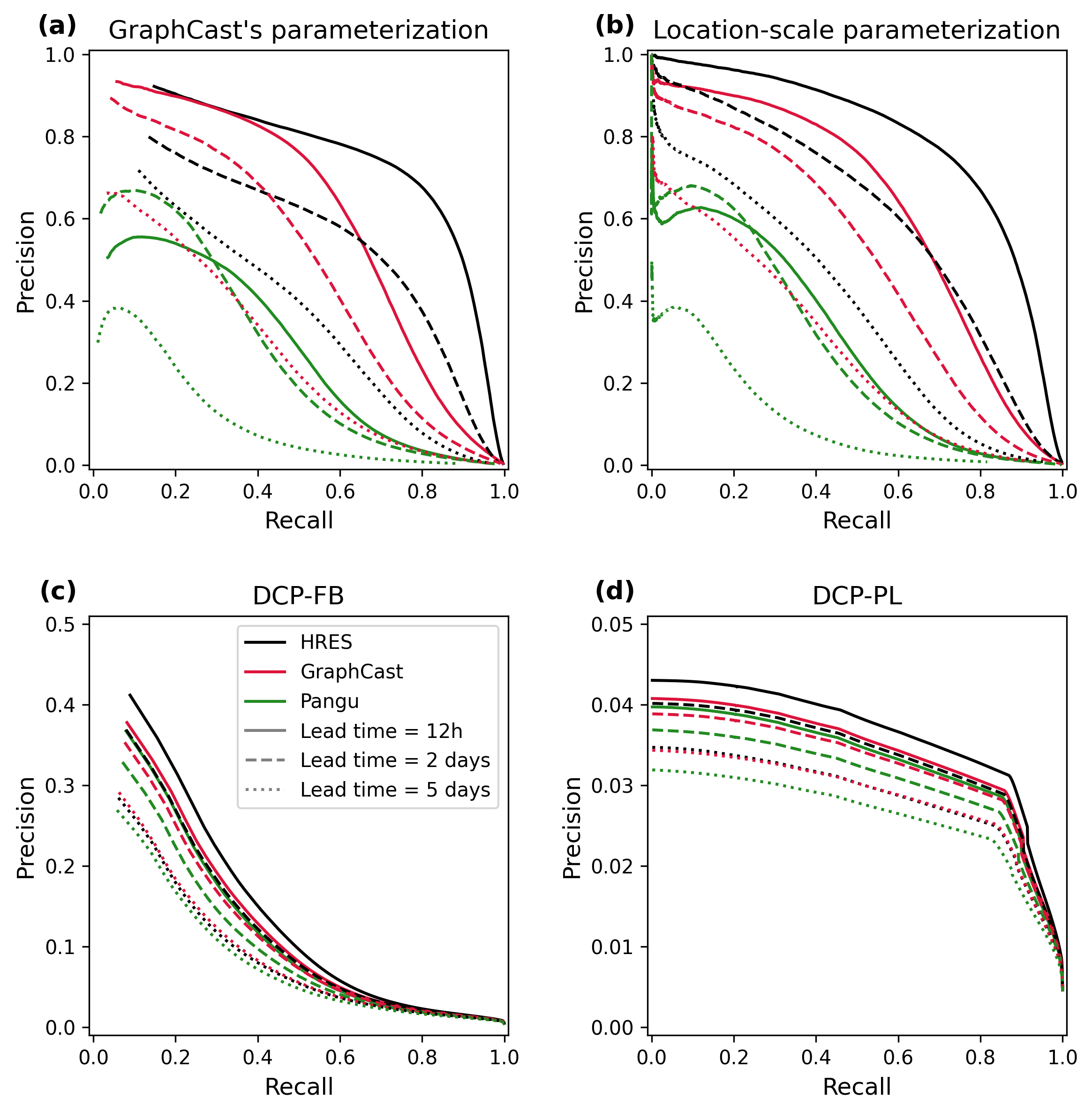}
    \caption{Aggregated PR curves across the globe using GraphCast's parameterization (a), the location-scale parameterization (b), DCP-FB (c), and DCP-PL (d) for HRES (black), GraphCast (red), and Pangu (green) at lead times of 12 h (solid lines), 2 days (dashed lines), and 5 days (dotted lines) over land (excluding Antarctica), for record-breaking events.}
    \label{fig:agg-PR-max}
\end{figure}

Figure~\ref{fig:agg-PR-max} compares forecasts for lead times of 12 h, 2, and 5 days using GraphCast's parameterization, the location-scale parameterization, DCP-FB, and DCP-PL. Here again, different parameterization strategies yield different ranges of precision and distinct decreasing trends in precision as recall increases. For record-breaking target thresholds, GraphCast's and the location-scale parameterization strategies also show that Pangu has a lower aggregated PR curve than that of HRES and GraphCast for the same lead time. This is even stronger here, as Pangu's curve for a 12-h lead time is below that of HRES for all considered lead times and below that of GraphCast for 12 h and 2 days. Both DCP-FB and DCP-PL also show that Pangu's aggregated PR curves are below those of HRES and GraphCast at the same lead time. Moreover, Pangu's curve for a 12-h lead time is below that of HRES for a 2-day lead time. Once again, DCP-FB and DCP-PL have a smaller range of precision values than the two other parameterization strategies

\section{Conclusion}\label{sec:conclusion}

We have provided a framework to aggregate ROC and PR curves. In particular, we proposed and studied two desirable properties when aggregating ROC or PR curves: the preservation of dominance, and the preservation of concavity (for ROC curves) and achievability (for PR curves). We obtained sufficient conditions and investigated two parameterization strategies satisfying them, namely DCP-FB and DCP-PL. Moreover, we have shown how GraphCast's parameterization \citep{Lam2023} and the location-scale parameterization \citep{Zhang2026} fail to preserve dominance using both theoretical and applied counterexamples. We have also investigated how the choice of parameterization strategy affects the aggregation of PR curves across numerous locations following the settings of Fig.~3D in \citet{Lam2023} and Fig.~3D in \citet{Zhang2026}.

Throughout the article, we focused on spatial forecasts as they are the most popular application. However, the points made remain valid for any multivariate forecasts, including spatial forecasts, temporal forecasts, multivariable forecasts, or any combination of these categories.\\

In addition to binary outcomes, the results could be adapted to ROC curves for linearly ordered outcomes \citep{Gneiting2022}. Moreover, the concept of PR curve could also be extended to such outcomes. The effect of the target threshold was not investigated, but future research could address it by examining the properties of aggregated \textit{ROC movies} \citep{Gneiting2022}.

The aggregation strategies considered aim to summarize information across the entire grid, with each grid point having equal importance. An event predicted in the vicinity of an observed event is counted as a false alarm at the location of the forecast event and a miss at the location of the observed event. This phenomenon is called the \textit{double-penalty effect} \citep{Ebert2008}. \citet{Stein2019} have proposed neighborhood-based contingency tables to circumvent the double penalty effect within neighborhoods of a specified size. Local contingency tables are computed over neighborhoods and are then aggregated into a single contingency table. This last step corresponds to aggregating counts. Hence, our proposed framework could be associated with their approach to account for the double-penalty effect. However, as our results impose conditions on the choice of parameterization strategy, further investigation is needed to unveil how to combine these two approaches.

Moreover, in practice, aggregated scores are often weighted by the grid cell area during spatial aggregation; weighted aggregated (interpolated) counts could be considered to relax the assumption that counts have the same meaning across the considered domain.

\subsection*{Recommendations}

In light of our results and the literature, we issue the following recommendations.\\

In the case of a single location, only concave ROC curves should be compared, and only achievable PR curves should be compared. If the raw curves do not meet these criteria, the forecast should be transformed using the PAV algorithm.

In the case of multiple locations (or times or variables), we recommend that the forecast be location-wise PAV-transformed, and then the ROC (resp. PR) curves parameterized by FB or parallel lines can be aggregated based on aggregating counts and with the parameterization strategy $\gamma_s(u)=u$, for all $u\in[0,1]$. This corresponds to DCP-FB and DCP-PL and ensures the preservation of dominance (Definition~\ref{def:preservation-dominance}), the preservation of concavity (Definition~\ref{def:preservation-concavity}) when aggregating ROC curves, and the preservation of achievability (Definition~\ref{def:preservation-achievability}) when aggregating PR curves.

\section*{Acknowledgements}
RP and SE acknowledge the support of the Swiss National Science Foundation (Grant number 186858). RP and JZ gratefully acknowledge funding from the NCCR CLIM+.

\section*{Conflict of interest}

The authors declare that they have no known competing financial interests or personal relationships that could have appeared to influence the work reported in this paper.

\section*{Authors’ contributions}

\textbf{RP}: conceptualization; methodology; software; formal analysis; data curation; visualization; writing – original draft. \textbf{ZZ}: conceptualization; methodology; writing – review and editing. \textbf{JZ}: conceptualization; methodology; funding acquisition; supervision; writing – review and editing. \textbf{SE}: conceptualization; methodology; funding acquisition; supervision; writing – review and editing.

\section*{Data availability statement}

All data used in this study are publicly available through the WeatherBench 2 dataset \citep{Rasp2024}. The code required to reproduce the analysis and figures is openly available in the Zenodo repository at \url{https://doi.org/10.5281/zenodo.22803022}.

\printbibliography

\appendix

\section{Binary forecasts}\label{appendix:binary-forecasts}

We consider a binary forecast $X\in\{0,1\}$ and a binary observation $Y\in\{0,1\}$. We denote $(x_i, y_i), i=1,\dots,n$ as $n$ realizations of the forecast-observation pair $(X,Y)$. The FAR, the HR, the recall, and the precision are then defined as
\begin{align*}
    \FAR &= \frac{\sum_{i=1}^n \ind{x_i=1}\ind{y_i=0}}{\sum_{i=1}^n \ind{y_i=0}};\\
    \HR = \Rec &= \frac{\sum_{i=1}^n \ind{x_i=1}\ind{y_i=1}}{\sum_{i=1}^n \ind{y_i=1}};\\
    \Pre &= \frac{\sum_{i=1}^n \ind{x_i=1}\ind{y_i=1}}{\sum_{i=1}^n \ind{x_i=1}}.
\end{align*}

In this setting, forecasts correspond to points in ROC space. Nonetheless, they can be compared in ROC space and even ranked when one forecast \textit{dominates} another.

\begin{definition}
    Let $X_A$ and $X_B$ be competing binary forecasts for the same forecasting task. $X_A$ \textit{dominates} $X_B$ \textit{in ROC space} if $\FAR^A\leq \FAR^B$ and $\HR^A\geq \HR^B$, with at least one strict inequality, where $\FAR^M$ and $\HR^M$ are the false alarm rate and the hit rate, respectively, of forecast $X_M$ with $M=A,B$.
\end{definition}

Similarly, binary forecasts correspond to points in PR space. \textit{Dominance in PR space} can be defined to compare and rank binary forecasts.
\begin{definition}
    Let $X_A$ and $X_B$ be competing binary forecasts. $X_A$ \textit{dominates} $X_B$ \textit{in PR space} if 
    $\Rec^A\geq \Rec^B$ and $\Pre^A\geq \Pre^B$, with at least one strict inequality, where $\Rec^M$ and $\Pre^M$ are the recall and the precision, respectively, of forecast $X_M$ with $M=A,B$.
\end{definition}

Binary forecasts can be issued across multiple locations. We consider forecasts and observations over $d$ spatial locations. Within this setting, let $X\in\{0,1\}^d$ be a forecast and $Y\in\{0,1\}^d$ an observation. We denote $(x_i,y_i)$, $i=1,\dots,n$ as $n$ realizations of the forecast-observation pair $(X,Y)$. A desirable property when spatially aggregating binary forecasts in ROC or PR space is the preservation of dominance.
\begin{definition}
    Let $X_A$ and $X_B$ be two binary forecasts such that $X_A$ dominates $X_B$ in ROC (PR) space across $d$ locations. An aggregation strategy \textit{preserves dominance in ROC (PR) space} if $X_A$ dominates $X_B$ in ROC (PR) space after aggregation across the $d$ locations.
\end{definition}

Since both the forecast and the observation are binary, there is no freedom induced by the choice of decision and target thresholds. Hence, we obtain the following general result.

\begin{theorem}\label{thm:ROC_PR-space-dom-opt1}
    The aggregation strategy based on aggregating counts :
    \begin{itemize}
        \item preserves dominance in ROC space;
        \item fails to preserve dominance in PR space.
    \end{itemize}
\end{theorem}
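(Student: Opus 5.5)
The statement has two parts, and I will treat them separately. The first part is a positive argument based on mediant-type inequalities. The second part is settled by an explicit counterexample.

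For the ROC part, fix locations $s=1,\dots,d$ and write the location-wise contingency tables of forecast $M\in\{A,B\}$ as $(a^M_s,b^M_s,c^M_s,d^M_s)$. The observation is common to both forecasts, so the location-wise numbers of events $n^1_s:=a^M_s+c^M_s$ and of non-events $n^0_s:=b^M_s+d^M_s$ do not depend on $M$. Dominance in ROC space at location $s$ means $\FAR^A_s\le \FAR^B_s$ and $\HR^A_s\ge\HR^B_s$. Multiplying by the common denominators gives
\[
b^A_s=n^0_s\,\FAR^A_s\le n^0_s\,\FAR^B_s=b^B_s, \qquad a^A_s\ge a^B_s .
\]
Aggregating counts yields
\[
\FAR^M=\frac{\sum_s b^M_s}{\sum_s n^0_s}, \qquad \HR^M=\frac{\sum_s a^M_s}{\sum_s n^1_s}.
\]
These are ratios with a \emph{common} denominator for $A$ and $B$, so summing the location-wise inequalities gives $\FAR^A\le\FAR^B$ and $\HR^A\ge\HR^B$.

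For strictness, some location $s^\ast$ has a strict inequality. Its denominator is then positive (a rate at a location with empty denominator is undefined, and I adopt the convention that such locations are excluded). Hence the corresponding count inequality is strict, and it survives summation. Equivalently, the aggregated FAR and HR are convex combinations of the location-wise ones, with weights $n^0_s/\sum n^0$ and $n^1_s/\sum n^1$ shared by both forecasts. This is the same mechanism as in Theorem~\ref{thm:preservation-dominance-ROC}.

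For the PR part, it suffices to exhibit two binary forecasts over $d=2$ locations that satisfy dominance in PR space at each location but not after aggregation. The key point is that precision has the \emph{forecast-dependent} denominator $a+b$. The aggregated precision is therefore a mixture of location-wise precisions with weights $(a^M_s+b^M_s)/\sum_s(a^M_s+b^M_s)$, and these weights differ between $A$ and $B$. This is a Simpson's-paradox effect, which is why the equal-FB condition \eqref{eq:locwise-FB-eq} appears in Theorem~\ref{thm:preservation-dominance-PR}.

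I will construct the counterexample as follows. I take equal recall at each location, with $A$ strictly better in precision at each location, but $A$ issuing most of its warnings at a location with low precision, while $B$ issues most of its warnings at a high-precision location. A concrete attempt: at location 1, with 10 events, $A$ has $a=1,b=0$ and $B$ has $a=1,b=1$. At location 2, with 100 events, take $A$ with $a=90,b=810$ (precision $0.1$) and $B$ with $a=90,b=900$ (precision $\approx0.09$). Computing the aggregated values, this particular choice may still preserve dominance, so I will tune the numbers. The cleaner design is to let the recalls differ across locations so that the warning volume shifts between locations: $A$ has many positives at the low-precision location, while $B$ has few there. I must then check that $A$ still has both $\Rec^A_s\ge \Rec^B_s$ and $\Pre^A_s\ge\Pre^B_s$.

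The main obstacle is that aggregated recall, having a common denominator, is automatically preserved. The failure must therefore come from aggregated precision, $\Pre^A<\Pre^B$. I will search for it with $A$ having slightly higher recall at the low-base-rate location: there, extra hits come with many false alarms that drag down aggregate precision, while $B$ concentrates its positives at the high-base-rate location. I will verify the final numbers by direct arithmetic.
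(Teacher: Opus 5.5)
Your ROC argument is correct and matches the paper's. The common observation fixes the location-wise numbers of events and non-events, so the rate inequalities become count inequalities $b^A_s\le b^B_s$ and $a^A_s\ge a^B_s$. These survive summation because the aggregated FAR and HR have denominators shared by $A$ and $B$. Your handling of strictness is, if anything, more careful than the paper's.

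\textbf{The gap is in the PR bullet.} This bullet is an existence claim, so it needs an explicit witness, and you never produce one. Your only concrete attempt actually preserves dominance: $\Pre^A=91/901\approx 0.101>\Pre^B=91/992\approx 0.092$. Everything after that is a plan to ``tune the numbers.''

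The attempt fails because $A$ and $B$ have the same number of hits ($90$) at the low-precision location. So $A$ gains precision there without shifting any warning volume. Recall dominance forces $a^A_s\ge a^B_s$ at every location, so $B$ can never have more hits than $A$ anywhere. The Simpson effect must therefore come from $A$ issuing many \emph{more} warnings than $B$ at the low-precision location, at a precision only just above $B$'s there but below $B$'s aggregated precision.

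Here is a witness. Take two events at location 1 and ten events at location 2, with, say, $100$ non-events at each. At location 1, let $A$ have $(a,b)=(2,0)$ and $B$ have $(1,1)$. At location 2, let $A$ have $(a,b)=(10,80)$ and $B$ have $(1,9)$.
\begin{itemize}
\item At location 1, $A$ is strictly better in recall ($1>1/2$) and in precision ($1>1/2$).
\item At location 2, $A$ is strictly better in recall ($1>1/10$) and in precision ($1/9>1/10$).
\item After aggregating counts, $\Rec^A=1>\Rec^B=1/6$, but $\Pre^A=12/92=3/23<1/6=\Pre^B$.
\end{itemize}
So neither aggregated forecast dominates the other.

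You are right to insist on checking recall as well as precision. The paper's Example~\ref{ce:dom-PR-space} checks only precision. At its location 1, forecast $A$ has $4$ hits against $6$ for $B$, with a common number of events, so $\Rec^A_1<\Rec^B_1$. Hence $A$ does not actually dominate $B$ in PR space there, and you should not borrow that example as it stands.
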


\begin{proof}[Proof of Theorem~\ref{thm:ROC_PR-space-dom-opt1}.]
    
Let $X_A$ and $X_B$ be two binary forecasts over $d$ locations. 

\begin{proof}[Proof in ROC space.]
    At each location $s=1,\dots,d$, $X_A$ dominates $X_B$ in ROC space, i.e.,
\begin{equation*}
    \begin{cases}
        \FAR_s^A \leq \FAR_s^B\\
        \HR_s^A\geq \HR^B_s
    \end{cases},
\end{equation*}
with at least one strict inequality.\\
    
Since the denominator in $\FAR^M_s$ and $\HR^M_s$ does not depend on the forecast $M=A,B$, the inequalities hold for the numerators. It corresponds to the following conditions 
\begin{equation*}
    \begin{cases}
        \sum_{i=1}^n \ind{x_{i,s}^A=1}\ind{y_{i,s}=0} \leq \sum_{i=1}^n \ind{x_{i,s}^B=1}\ind{y_{i,s}=0}\\
        \sum_{i=1}^n \ind{x_{i,s}^A=1}\ind{y_{i,s}=1} \geq \sum_{i=1}^n \ind{x_{i,s}^B=1}\ind{y_{i,s}=1}
    \end{cases},
\end{equation*}
with at least one strict inequality, for all $s=1,\dots,d$. The aggregated FAR and HR are given by \eqref{eq:FAR-interp-counts} and \eqref{eq:HR_Re-interp-counts}, where it can be noted that the denominators are independent of the forecast $M=A,B$. Thus,
\begin{equation*}
    \begin{cases}
        \FAR^A \leq \FAR^B\\
        \HR^A \geq \HR^B
    \end{cases},
\end{equation*}
with at least one strict inequality, where $\FAR^M$ and $\HR^M$ are the aggregated false alarm rate and the aggregated hit rate, respectively, of forecast $M=A,B$.
\end{proof}

\begin{proof}[Proof in PR space.]
    Since recall and HR have the same definition, the inequalities for the aggregated recall hold. However, the aggregated precision can cause issues. Example~\ref{ce:dom-PR-space} shows that \eqref{eq:HR_Re-interp-counts} and \eqref{eq:Pre-interp-counts} do not always preserve dominance in PR space.
\end{proof}
\end{proof}

\subsection{Sufficient conditions for dominance preservation in PR space}

Under certain conditions, aggregating counts can preserve dominance in PR space.

\begin{proposition}\label{prop:PRspace-dom-fb}
    A sufficient condition for aggregating counts to preserve dominance in PR space is that, at any location, the binary forecasts $X_A$ and $X_B$ have the same frequency bias. Formally, $\FB_s^A=\FB_s^B$, where $\FB_s^M$ is the frequency bias of forecast $X_M$, with $M=A,B$, at location $s$, for all $s=1,\dots,d$.
\end{proposition}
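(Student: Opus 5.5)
At each location $s$ we are given that $X_A$ dominates $X_B$ in PR space, i.e. $\Rec_s^A\geq\Rec_s^B$ and $\Pre_s^A\geq\Pre_s^B$ with at least one strict inequality. We also assume $\FB_s^A=\FB_s^B$. The plan is to use the FB equality to turn both conditions into statements about hit counts alone, which then add up across locations.

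Write $a_s^M,b_s^M$ for the hits and false alarms of forecast $M$ at location $s$. Let $n_s^{+}=a_s^M+c_s^M$ be the number of observed events, which does not depend on $M$. The FB equality gives
\[
a_s^A+b_s^A=\FB_s\, n_s^{+}=a_s^B+b_s^B=:m_s ,
\]
so the number of forecast occurrences $m_s$ is also common to $A$ and $B$. Then $\Rec_s^M=a_s^M/n_s^+$ and $\Pre_s^M=a_s^M/m_s$. Both are increasing functions of $a_s^M$ with denominators free of $M$ (we assume $m_s>0$; if $m_s=0$ then $a_s^A=a_s^B=0$ and the location can be dropped). Hence the location-wise dominance is equivalent to $a_s^A\geq a_s^B$, and the strict inequality somewhere forces $a_{s^\ast}^A>a_{s^\ast}^B$ for at least one location $s^\ast$.

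Next I aggregate using \eqref{eq:HR_Re-interp-counts} and \eqref{eq:Pre-interp-counts}. The aggregated recall is $\sum_s a_s^M/\sum_s n_s^+$, and its denominator is independent of $M$. The aggregated precision has denominator $\sum_s (a_s^M+b_s^M)=\sum_s m_s$, which by the FB equality is also independent of $M$. Both numerators are $\sum_s a_s^M$, and summing the location-wise inequalities gives $\sum_s a_s^A\geq\sum_s a_s^B$. So the aggregated recall and precision of $A$ are at least those of $B$.

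The inequality is strict because of location $s^\ast$. Since strict dominance at one location makes both aggregated inequalities strict, we obtain dominance in PR space after aggregation.

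The step that needs care is the reduction to hit counts. Without equal FB, the precision denominators $a_s^M+b_s^M$ differ between forecasts, and the ratio of sums can reverse even though every location-wise ratio favours $A$ (a Simpson-type effect, as in Example~\ref{ce:dom-PR-space}). The role of the FB equality is to make the precision denominators identical, so that the aggregation becomes the same monotone sum argument used for ROC space in Theorem~\ref{thm:ROC_PR-space-dom-opt1}.
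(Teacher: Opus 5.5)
Your proof is correct and follows the paper's argument: the frequency-bias equality makes the number of forecast occurrences $a_s^M+b_s^M$ the same for both forecasts, so location-wise PR dominance reduces to $a_s^A\geq a_s^B$. Summing these hit counts over locations, with aggregated denominators that do not depend on $M$, gives the aggregated recall and precision inequalities. Your explicit treatment of the strict inequality and of the degenerate case $a_s^M+b_s^M=0$ makes slightly more precise what the paper leaves implicit.
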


\begin{proof}[Proof for Proposition~\ref{prop:PRspace-dom-fb}.]
    Assume that at any location $s=1,\dots,d$, the following equality on FB holds
    \begin{align*}
        & \qquad \FB_s^A=\FB^B_s\\
        &\Leftrightarrow \frac{\sum_{i=1}^n \ind{x_{i,s}^A=1}}{\sum_{i=1}^n \ind{y_{i,s}=1}} = \frac{\sum_{i=1}^n \ind{x_{i,s}^B=1}}{\sum_{i=1}^n \ind{y_{i,s}=1}}\\
        &\Leftrightarrow \sum_{i=1}^n \ind{x_{i,s}^A=1} = \sum_{i=1}^n \ind{x_{i,s}^B=1}.
    \end{align*}
    Hence, the fact that $X_A$ dominates $X_B$ in PR space across locations leads to
    \begin{equation*}
    \begin{cases}
        \sum_{i=1}^n \ind{x_{i,s}^A=1}\ind{y_{i,s}=1} \geq \sum_{i=1}^n \ind{x_{i,s}^B=1}\ind{y_{i,s}=1}\\
        \Rec^A_s \geq \Rec^B_s
    \end{cases},
\end{equation*}
with at least one strict inequality. Since the denominator in \eqref{eq:Pre-interp-counts} is independent of the forecast $M=A,B$, then we obtain
\begin{equation*}
    \begin{cases}
        \Pre^A \geq \Pre^B\\
        \Rec^A \geq \Rec^B
    \end{cases},
\end{equation*}
with at least one strict inequality.
\end{proof}

Note that there is no equivalence to the preservation of concavity for binary forecasts, as concavity cannot be defined.

\section{Pool-Adjacent-Violators algorithm}\label{appendix:PAV}

Let $G_{i:j}$ denote the group from index $i$ to index $j$ (both included). Let $t_Y$ be a fixed target threshold.

\begin{algorithm}
    \caption{\citep{Dimitriadis2024} PAV algorithm}\label{alg:PAV}
    \begin{algorithmic}
        \Require $(x_1,y_1),\ldots,(x_n,y_n)\in\bbR^2$ where $x_1\leq\ldots\leq x_n$
        \Ensure calibrated forecast values $\hat{x}_1,\ldots,\hat{x}_n$\\
        partition into groups $G_{1:1},\ldots,G_{n:n}$ and let $\hat{x}_i=\onefun_{y_i>t_Y}$ for $i=1,\ldots,n$
        
        \While{there are groups $G_{k:i}$ and $G_{(i+1):l}$ such that $\hat{x}_1\leq\ldots\leq \hat{x}_i$ and $\hat{x}_i>\hat{x}_{i+1}$}
        \State merge $G_{k:i}$ and $G_{(i+1):l}$ and let $\hat{x_i}=\frac{1}{l-k+1}\sum_{j=k}^l y_j$ for $i=k,\ldots,l$
        \EndWhile
    \end{algorithmic}
\end{algorithm}

\section{Dominance equivalence}\label{appendix:dominance-equiv}

\begin{lemma}\label{lemma:dominance-equiv-ROC}
    Let $\hat{R}_A, \hat{R}_B \in \mathcal{R}$ be concave ROC curves. The following two definitions of dominance between ROC curves are equivalent
    \begin{enumerate}[1)]
        \item for all $x_B\in[0,1]$, there exists $x_A\in[0,1]$ such that
        \[
             \begin{cases}
                x_A \leq x_B\\
                \hat{R}_A(x_A) \geq \hat{R}_B(x_B)
            \end{cases},
        \]
        and there exists at least one $x_A^\ast, x_B^\ast\in[0,1]$ satisfying $x_A^*\leq x_B^*$ and $\hat{R}_A(x_A^\ast)\geq \hat{R}_B(x_B^*)$ with at least one strict inequality.
        \item for all $x\in[0,1]$,  $\hat{R}_A(x)\geq \hat{R}_B(x)$. Moreover, there is at least one $x^\ast\in[0,1]$ such that the inequality is strict.
    \end{enumerate}
\end{lemma}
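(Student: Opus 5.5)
We prove the two implications separately. Both rest on the fact that every $R\in\mathcal{R}$ is non-decreasing, so concavity is used only to ensure the curves are continuous on $(0,1]$, which handles the strictness clauses.

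\emph{2) $\Rightarrow$ 1).} Given $x_B\in[0,1]$, take $x_A=x_B$. Then $x_A\le x_B$ holds trivially, and 2) gives $\hat{R}_A(x_A)\ge\hat{R}_B(x_B)$. For the strict clause, take $x_A^\ast=x_B^\ast=x^\ast$, where $x^\ast$ is the point at which 2) is strict. Then $\hat{R}_A(x_A^\ast)>\hat{R}_B(x_B^\ast)$, which is the required strict inequality. This direction needs neither concavity nor monotonicity.

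\emph{1) $\Rightarrow$ 2).} Fix $x\in[0,1]$ and apply 1) with $x_B=x$. This gives some $x_A\le x$ with $\hat{R}_A(x_A)\ge\hat{R}_B(x)$. Since $\hat{R}_A$ is non-decreasing, $\hat{R}_A(x)\ge\hat{R}_A(x_A)\ge\hat{R}_B(x)$, which is the weak inequality of 2) at every point.

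For strictness, take the pair $x_A^\ast\le x_B^\ast$ from 1), where at least one of $x_A^\ast< x_B^\ast$ or $\hat{R}_A(x_A^\ast)>\hat{R}_B(x_B^\ast)$ holds.
\begin{itemize}
\item If $\hat{R}_A(x_A^\ast)>\hat{R}_B(x_B^\ast)$, monotonicity gives $\hat{R}_A(x_B^\ast)\ge\hat{R}_A(x_A^\ast)>\hat{R}_B(x_B^\ast)$. So $x^\ast=x_B^\ast$ works.
\item If only $x_A^\ast<x_B^\ast$ is strict, with $\hat{R}_A(x_A^\ast)\ge\hat{R}_B(x_B^\ast)$, this case is the main obstacle: monotonicity alone does not give a strict inequality.
\end{itemize}

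In the second case we use concavity. A concave non-decreasing $\hat{R}_B$ with $\hat{R}_B(0)=0$, $\hat{R}_B(1)=1$ is continuous on $(0,1]$. It is also strictly increasing on $[0,\bar{x}]$, where $\bar{x}=\inf\{x:\hat{R}_B(x)=1\}$, and constant equal to $1$ afterwards. We split on where $x_B^\ast$ lies.
\begin{itemize}
\item If $x_B^\ast\le\bar{x}$, strict monotonicity of $\hat{R}_B$ there gives $\hat{R}_B(x_A^\ast)<\hat{R}_B(x_B^\ast)\le\hat{R}_A(x_A^\ast)$, so $x^\ast=x_A^\ast$ works. At $x_A^\ast=0$ this needs care: it would force $\hat{R}_B(x_B^\ast)\le\hat{R}_A(0)=0$, which cannot happen when $x_B^\ast>0$ unless $\hat{R}_B$ vanishes on $[0,x_B^\ast]$. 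A concave $\hat{R}_B$ with $\hat{R}_B(1)=1$ does not vanish on such an interval.
\item If $x_B^\ast>\bar{x}$, then $\hat{R}_B(x_B^\ast)=1$, so $\hat{R}_A(x_A^\ast)=1$ with $x_A^\ast<x_B^\ast$.
  \begin{itemize}
  \item If $x_A^\ast<\bar{x}$, then $\hat{R}_A(x_A^\ast)=1>\hat{R}_B(x_A^\ast)$, and $x^\ast=x_A^\ast$ works.
  \item If $x_A^\ast\ge\bar{x}$, the witness pair lies where both curves equal $1$, and we cannot extract strictness from it.
  \end{itemize}
\end{itemize}

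This degenerate sub-case, where $\hat{R}_B$ reaches $1$ before $x_A^\ast$, is where the argument is most delicate. The fallback is to read the strict clause of 1) as meaning the pair $(x_A^\ast,\hat{R}_A(x_A^\ast))$ strictly improves on $(x_B^\ast,\hat{R}_B(x_B^\ast))$ in the sense of a distinct ROC point. We then argue by contradiction. Suppose $\hat{R}_A=\hat{R}_B$ everywhere. Then every pair $(x_A,x_B)$ allowed by 1) yields the same curve, and the strict clause of 1) must be understood as excluding this trivial flat-segment witness. Under that reading, and given the weak inequality already established, some point of strict dominance $x^\ast$ must exist. We will state explicitly that the lemma is understood in this sense.
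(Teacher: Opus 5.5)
Your treatment of 2) $\Rightarrow$ 1) is correct. So is the weak inequality in 1) $\Rightarrow$ 2), which uses the same monotonicity argument as the paper's \emph{Case 2}. The strictness sub-cases $\hat{R}_A(x_A^\ast)>\hat{R}_B(x_B^\ast)$, $x_A^\ast<x_B^\ast\le\bar{x}$ and $x_A^\ast<\bar{x}<x_B^\ast$ are also handled correctly, and you rightly locate the only role of concavity in the strict increase of $\hat{R}_B$ below its plateau. The genuine gap is the remaining sub-case $\bar{x}\le x_A^\ast<x_B^\ast$. It cannot be closed, because there the lemma, as literally stated, is false.

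Take $\hat{R}_A=\hat{R}_B=R$ with $R(x)=\min(2x,1)$. This is a concave element of $\mathcal{R}$: it is the ROC curve of a binary forecast with hit rate $1$ and false alarm rate $1/2$. The first clause of 1) holds with $x_A=x_B$. The strict clause holds with $x_A^\ast=1/2<x_B^\ast=1$ and $\hat{R}_A(1/2)=1\ge 1=\hat{R}_B(1)$. Yet 2) fails because the curves coincide.

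The paper's proof does not notice this. In its contrapositive, \emph{Case 1} stops at $\hat{R}_A=\hat{R}_B$ as if this obviously contradicted 1). That is true only when the common curve has no flat segment at height $1$. Even then it needs the strict-monotonicity argument you give, and the paper never invokes concavity.

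Your fallback does not repair this. Reading the strict clause as producing \emph{a distinct ROC point} does not exclude the witness above, since $(1/2,1)\neq(1,1)$. Your argument by contradiction just declares the offending witness inadmissible, which assumes the conclusion rather than proving it. The honest route is to record the counterexample and prove an amended statement. Two amendments work:
\begin{itemize}
\item Replace the strict clause of 1) by $\hat{R}_A(x_A^\ast)>\hat{R}_B(x_B^\ast)$. Then your first bullet alone gives strictness, concavity is not needed at all, and 2) $\Rightarrow$ 1) is unchanged.
\item Restrict $x_B^\ast$ to $[0,\bar{x}]$ with $\bar{x}=\inf\{x:\hat{R}_B(x)=1\}$, or assume $\hat{R}_B$ reaches $1$ only at $x=1$. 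Then your case analysis is complete. The direction 2) $\Rightarrow$ 1) still works, because any $x^\ast$ with $\hat{R}_A(x^\ast)>\hat{R}_B(x^\ast)$ satisfies $\hat{R}_B(x^\ast)<1$, hence $x^\ast<\bar{x}$.
\end{itemize}
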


\begin{proof}
    Let $\hat{R}_A, \hat{R}_B \in \mathcal{R}$ be concave ROC curves.

    \noindent\textit{Proof of $1)\Rightarrow 2)$}.
    We prove the contrapositive (i.e., $\text{not } 2) \Rightarrow\text{not }1)$).
    
    \paragraph{Case 1:} There is no strict inequality. This is equivalent to $\hat{R}_A=\hat{R}_B$.
    
    \paragraph{Case 2:} There exist $x^\ast\in[0,1]$ such that $\hat{R}_A(x^\ast)<\hat{R}_B(x^\ast)$. Let $x_A\in[0,1]$ such that $x_A\leq x^\ast$. Since $\hat{R}_A$ is non-decreasing, we have $\hat{R}_A(x_A)\leq \hat{R}_A(x^\ast)<\hat{R}_B(x^\ast)$. Thus, all the points of $\hat{R}_A$ on the left of $(x^*,\hat{R}_B(x^*))$ are strictly below it.

    \noindent\textit{Proof of $2)\Rightarrow 1)$}. It is trivial by definition.
\end{proof}

Similarly, we can define the dominance of PR curves in two equivalent ways. The proof is omitted since the same proving methodology as in Lemma~\ref{lemma:dominance-equiv-ROC} can be applied.

\begin{lemma}\label{lemma:dominance-equiv-PR}
    Let $\hat{P}_A, \hat{P}_B \in \mathcal{P}$ be the PR curves associated with the forecasts $A$ and $B$, respectively. The two following definitions of dominance are equivalent
    \begin{enumerate}[1)]
        \item for all $x_B\in[0,1]$, there exists $x_A\in[0,1]$ such that
        \[
             \begin{cases}
                x_A \geq x_B\\
                \hat{P}_A(x_A) \geq \hat{P}_B(x_B)
            \end{cases},
        \]
        and for all $x_B\in[0,1]$ there exists at least one point $x_A^*\in[0,1]$ satisfying $x_A^*\geq x_B$ and $\hat{P}_A(x_A^\ast)\geq \hat{P}_B(x_B)$ with at least one strict inequality.
        \item for all $x\in[0,1]$, $\hat{P}_A(x)\geq \hat{P}_B(x)$. Moreover, there is at least one $x^\ast\in[0,1]$ such that the inequality is strict.
    \end{enumerate}
\end{lemma}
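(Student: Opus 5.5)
The plan is to mirror the proof of Lemma~\ref{lemma:dominance-equiv-ROC}: show $2)\Rightarrow 1)$ directly and $1)\Rightarrow 2)$ by contraposition. The only structural change is that in PR space the favourable direction in the abscissa is to the right ($x_A\geq x_B$, higher recall) rather than to the left. PR curves are not monotone, so the ROC argument does not transfer verbatim. I will therefore work with the achievable curves $\hat{P}_A,\hat{P}_B$ and exploit their link to concave ROC curves.

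\textit{Proof of $2)\Rightarrow 1)$.} Take $x_A=x_B$. Pointwise domination gives $\hat{P}_A(x_B)\geq\hat{P}_B(x_B)$. At the point $x^\ast$ where the inequality is strict, the choice $x_A^\ast=x^\ast$ supplies the required strict inequality.

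\textit{Proof of $1)\Rightarrow 2)$.} Suppose $2)$ fails. If there is no strict inequality anywhere while $\hat{P}_A\geq\hat{P}_B$, then $\hat{P}_A=\hat{P}_B$. In that case the strictness clause in $1)$ fails, provided the curves are non-increasing (see below): any point $x_A\geq x_B$ then has $\hat{P}_A(x_A)\leq\hat{P}_A(x_B)=\hat{P}_B(x_B)$, with equality only when $x_A=x_B$. Otherwise there is $x^\ast$ with $\hat{P}_A(x^\ast)<\hat{P}_B(x^\ast)$. I then need that every $x_A\geq x^\ast$ satisfies $\hat{P}_A(x_A)\leq\hat{P}_A(x^\ast)<\hat{P}_B(x^\ast)$, which contradicts $1)$ at $x_B=x^\ast$.

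\textbf{Key step: an achievable PR curve is non-increasing in recall.} This is the main obstacle, since raw PR curves need not be monotone (Figure~\ref{fig:PAV-ROC_PR}).
\begin{enumerate}[(i)]
\item By \eqref{eq:Re-pFH}--\eqref{eq:Pre-pFH}, precision at recall $r$ equals
\[
\frac{p}{p+(1-p)\,\FAR/\HR}.
\]
It is therefore a decreasing function of the ratio $\FAR/\HR$ taken along the ROC curve at $\HR=r$.
\item The achievable PR curve comes from the concave hull $\hat{R}$. Concavity with $\hat{R}(0)=0$ implies that the chord slope $\hat{R}(x)/x$ is non-increasing in $x$.
\item Equivalently, $\FAR/\HR$ is non-decreasing as $\HR$ increases along the hull. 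The vertical segment at $\FAR=0$ only helps, since there the ratio stays at $0$.
\item Hence precision is non-increasing in recall.
\end{enumerate}
The left-continuity convention in $\mathcal{P}$ and the undefined point at $\Rec=0$ need some care. I expect to handle them by restricting to $x\in(0,1]$ and taking limits.

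If the achievability argument feels too indirect, a fallback is Theorem~\ref{thm:equiv-dominance} together with Lemma~\ref{lemma:dominance-equiv-ROC}. Translating definition $1)$ for PR curves into the ROC statement $1)$ through the correspondence of \citet[Theorem~3.1]{Davis2006} would give the same conclusion.
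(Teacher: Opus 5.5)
\paragraph{The gap: Case 1 of your contrapositive.}
You claim that for $x_A\geq x_B$ one has $\hat P_A(x_A)=\hat P_B(x_B)$ only when $x_A=x_B$. That needs $\hat P_A$ to be \emph{strictly} decreasing. Your key step, which is correct and is exactly the ingredient the paper leaves implicit, only gives non-increasing. Achievable PR curves are generically flat somewhere.

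Along the first segment of the concave ROC hull, which issues from the origin (a vertical piece at $\FAR=0$ included, where precision is $1$), the ratio $\FAR/\HR$ is constant. So precision is constant on the corresponding recall range $(0,h_1]$. For a PAV-transformed forecast, this is the recall reached by the top PAV group.

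For instance, take $p=1/2$ and two equally sized PAV groups with values $3/4$ and $1/4$. The hull is $(0,0)\to(1/4,3/4)\to(1,1)$, and $\hat P\equiv 3/4$ on $(0,3/4]$. Let $\hat P_A=\hat P_B$ be this curve and choose $0<x_B^\ast<x_A^\ast\leq h_1$. Then $x_A^\ast>x_B^\ast$ and $\hat P_A(x_A^\ast)=\hat P_B(x_B^\ast)$. Condition 1) therefore holds, with its strict inequality in the abscissa, while condition 2) fails. Under your reading of ``at least one strict inequality'', the implication 1) $\Rightarrow$ 2) is false, not merely unproved.

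\paragraph{Relation to the paper and the repair.}
The paper's omitted proof inherits the same hole. Its ROC Case 1 only asserts $\hat R_A=\hat R_B$ and never shows that 1) fails. A horizontal top segment of a concave hull (a PAV group with value $0$) gives the analogous ROC counterexample.

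The repair is to require the strict inequality in the precision coordinate: $\hat P_A(x_A^\ast)>\hat P_B(x_B^\ast)$ for some $x_A^\ast\geq x_B^\ast$. Case 1 then follows from monotonicity alone, since $\hat P_A(x_A^\ast)\leq\hat P_A(x_B^\ast)=\hat P_B(x_B^\ast)$. Your Case 2 and your proof of 2) $\Rightarrow$ 1) (with $x_A^\ast=x_B^\ast=x^\ast$) go through unchanged.

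\paragraph{Two smaller points.}
In 2) $\Rightarrow$ 1), state explicitly that you read the strictness clause existentially. As literally written (``for all $x_B$''), it can never hold: at $x_B=1$ it forces $x_A^\ast=1$ and $\hat P_A(1)>\hat P_B(1)$, although both equal $p$.

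Your fallback via Theorem~\ref{thm:equiv-dominance} is not a mere translation. A point with higher recall and higher precision need not have lower $\FAR$, so the argument would require a chord argument using concavity of $\hat R_A$.
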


\section{Proofs}

\subsection{Preservation of dominance}

\subsubsection{Theorem~\ref{thm:preservation-dominance-ROC}}

\begin{proof}[Proof of Theorem~\ref{thm:preservation-dominance-ROC}.]
    
Let $\left\{
    \begin{bmatrix}
        a_s^M(t) & b_s^M(t) \\
        c_s^M(t) & d_s^M(t)
    \end{bmatrix}, t\in\bar{\bbR}
\right\}
$
with $s=1,\dots,d$ be families of contingency tables associated with the forecast-observation pair $(X_M,Y)$, for $M=A,B$. Then, Eqs. \eqref{eq:locationwise-ineq-FAR} and \eqref{eq:locationwise-ineq-HR} become
\begin{equation*}
    \begin{cases}
        \cfrac{b_s^A(\gamma_{A,s}(u))}{b_s^A(\gamma_{A,s}(u))+d_s^A(\gamma_{A,s}(u))} \leq \cfrac{b_s^B(\gamma_{B,s}(u))}{b_s^B(\gamma_{B,s}(u))+d_s^B(\gamma_{B,s}(u))}\\
        \cfrac{a_s^A(\gamma_{A,s}(u))}{a_s^A(\gamma_{A,s}(u))+c_s^A(\gamma_{A,s}(u))} \geq \cfrac{a_s^B(\gamma_{B,s}(u))}{a_s^B(\gamma_{B,s}(u))+c_s^B(\gamma_{B,s}(u))}
    \end{cases}
\end{equation*}
and since $b_s^M(t)+d_s^M(t)$ and $a_s^M(t)+c_s^M(t)$ are both independent of $t$ and $M$, it leads to
\begin{equation*}
    \begin{cases}
        b_s^A(\gamma_{A,s}(u)) \leq b_s^B(\gamma_{B,s}(u))\\
        a_s^A(\gamma_{A,s}(u)) \geq a_s^B(\gamma_{B,s}(u))
    \end{cases}.
\end{equation*}

Summing over all locations and dividing by a positive constant, we obtain 
\begin{equation*}
    \begin{cases}
        \cfrac{\sum_{s=1}^d b_s^A(\gamma_{A,s}(u))}{\sum_{s=1}^d \left( b_s^A(\gamma_{A,s}(u))+d_s^A(\gamma_{A,s}(u))\right)} \leq \cfrac{\sum_{s=1}^d b_s^B(\gamma_{B,s}(u))}{\sum_{s=1}^d \left( b_s^B(\gamma_{B,s}(u))+d_s^B(\gamma_{B,s}(u))\right)}, \\
        \cfrac{\sum_{s=1}^d a_s^A(\gamma_{A,s}(u))}{\sum_{s=1}^d \left( a_s^A(\gamma_{A,s}(u))+c_s^A(\gamma_{A,s}(u))\right)} \geq \cfrac{\sum_{s=1}^d a_s^B(\gamma_{B,s}(u))}{\sum_{s=1}^d \left(a_s^B(\gamma_{B,s}(u))+c_s^B(\gamma_{B,s}(u))\right)},
    \end{cases} 
\end{equation*}
which is 
\begin{equation*}
    \begin{cases}
        \FAR^A(\gamma_A(u))\leq \FAR^B(\gamma_B(u)),\\
        \HR^A(\gamma_A(u))\geq \HR^B(\gamma_B(u)).
    \end{cases}
\end{equation*}

This reasoning remains valid when considering $u^\star\in[0,1]$ such that one of the inequalities is strict.
\end{proof}

\subsubsection{Theorem~\ref{thm:preservation-dominance-PR}}

To show that preserving dominance between PR curves differs from preserving it between ROC curves, we provide an example showing how aggregating precision across locations can fail to preserve dominance.

\begin{example}\label{ce:dom-PR-space}
    Consider two binary forecasts, $A$ and $B$, issued over two locations. At location $s=1$, let
\begin{align*}
    \Pre_1^A &=\frac{a_1^A}{a_1^A+b_1^A} = \frac{4}{5};\\
    \Pre_1^B &=\frac{a_1^B}{a_1^B+b_1^B} = \frac{6}{8}.
\end{align*}
At location $s=2$, let
\begin{align*}
    \Pre_2^A &= \frac{a_2^A}{a_2^A+b_2^A} = \frac{2}{8};\\
    \Pre_2^B &= \frac{a_2^B}{a_2^B+b_2^B} = \frac{1}{5}.
\end{align*}
We have that $A$ dominates $B$ in PR space at both locations since
\begin{equation*}
    \begin{cases}
        \Pre_1^A = \frac{4}{5} &> \Pre_1^B = \frac{3}{4}\\
        \Pre_2^A = \frac{1}{4} &> \Pre_2^B = \frac{1}{5}
    \end{cases}.
\end{equation*}

However,
\begin{align*}
    \Pre^A &= \frac{a_1^A+a_2^A}{a_1^A+b_1^A+a_1^A+b_1^A} = \frac{4+2}{5+8} = \frac{6}{13};\\
    \Pre^B &= \frac{a_1^B+a_2^B}{a_1^B+b_1^B+a_1^B+b_1^B} = \frac{6+1}{8+5} = \frac{7}{13}.
\end{align*}
Hence, $\Pre^A<\Pre^B$.
\end{example}

Since recall and HR have the same definition, the inequalities for the aggregated recall hold. However, the aggregated precision can cause issues. Example~\ref{ce:dom-PR-space} shows that aggregating interpolated counts does not always preserve dominance in PR space.\\

\begin{proof}[Proof of Theorem~\ref{thm:preservation-dominance-PR}.]
    Note that the inequalities for the aggregated recall hold. Let $\left\{
        \begin{bmatrix}
            a_s^M(t) & b_s^M(t) \\
            c_s^M(t) & d_s^M(t)
        \end{bmatrix}, t\in\bar{\bbR}
    \right\}
    $
    with $s=1,\dots,d$ be families of contingency tables associated with the forecast-observation pair $(X_M,Y)$, for $M=A,B$. Eq.~\eqref{eq:locwise-FB-eq} becomes
    \begin{align*}
        & \qquad \cfrac{a_s^A(\gamma_{A,s}(u))+b_s^A(\gamma_{A,s}(u))}{a_s^A(\gamma_{A,s}(u))+c_s^A(\gamma_{A,s}(u))} = \cfrac{a_s^B(\gamma_{B,s}(u))+b_s^B(\gamma_{B,s}(u))}{a_s^B(\gamma_{B,s}(u))+c_s^B(\gamma_{B,s}(u))}\\
        &\Leftrightarrow \  a_s^A(\gamma_{A,s}(u))+b_s^A(\gamma_{A,s}(u)) = a_s^B(\gamma_{B,s}(u))+b_s^B(\gamma_{B,s}(u)),
    \end{align*}
    since $b_s^M(t)+d_s^M(t)$ and $a_s^M(t)+c_s^M(t)$ are both independent of $t$ and $M$. Thus, Eq.~\eqref{eq:locationwise-ineq-Pre} leads to $a_s^A(\gamma_{A,s}(u))\geq a_s^B(\gamma_{B,s}(u))$, for all $s=1,\dots,d$ and $u\in[0,1]$. Finally, summing the inequalities and dividing by the same nonzero quantities on both sides yields the desired inequality for aggregated precisions.
    
    This reasoning remains valid when considering $u^\star\in[0,1]$ such that one of the inequalities is strict.
\end{proof}

\subsection{Preservation of concavity and achievability}

\subsubsection{CEP of a family of continuous contingency tables}\label{appendix:aggregated-CEP}

This serves as a preamble to the following proof and details how to compute the CEP of a family of continuous contingency tables. We consider a family of continuous contingency tables,
\begin{equation*}
\left\{
    \begin{bmatrix}
        a(t) & b(t) \\
        c(t) & d(t)
    \end{bmatrix}, t\in\bar{\bbR}\right\},
\end{equation*}
indexed by $t\in\bar{\bbR}$. The ROC curve associated with this family corresponds to points of the form 
\begin{equation*}
    \left\{\left(\frac{b(t)}{b(t)+d(t)},\frac{a(t)}{a(t)+c(t)}\right), t\in\bar{\bbR} \right\},
\end{equation*}
and its raw ROC diagnostic can be obtained by removing horizontal and vertical segments of the curve (e.g., by restricting the values taken by $t$). Additionally, $(a(t)+c(t))/(a(t)+b(t)+c(t)+d(t))$ is constant and $t\mapsto (c(t)+d(t))/(a(t)+b(t)+c(t)+d(t))$, $t\in\bbR$, is a cumulative distribution function on $\bbR$. Thus, they can characterize the marginal distribution of a binary random variable and a real-valued random variable, respectively. 
Given Corollary 1 in \citet{GneitingVogel2022}, a pair of a continuous forecast and a binary observation can be characterized by a family of continuous contingency tables. Moreover, a CEP can thus be associated with it.

\subsubsection{Theorem~\ref{thm:preservation-concavity}}

\begin{proof}[Proof of Theorem~\ref{thm:preservation-concavity}.]
    We consider the concave ROC curves associated with location-wise PAV-transformed forecasts, rather than another parameterization of the concave ROC curve, without loss of generality.

     Let $(\hat{X}_{1},Y_1),\ldots,(\hat{X}_{d},Y_d)\in[0,1]\times\{0,1\}$ be location-wise PAV-transformed forecasts. Let $t_{Y,s}$ be the target threshold at location $s=1,\dots,d$. At every location $s=1,\dots,d$, the ROC curve is concave and the forecasts are calibrated. Thus, the local CEP is 
    \begin{equation}\label{eq:CEP-location}
        \mathrm{CEP}_s(x) = \frac{\widehat{\Pr}_s(\hat{X}_s=x,Y_s>t_{Y,s})}{\widehat{\Pr}_s(\hat{X}_s=x)} = x,
    \end{equation}
    for all $x\in[0,1]$ such that the denominator is non-zero, where $\widehat{\Pr}_s$ is the empirical distribution of $(\hat{x}_{s,1},y_{s,1}),\dots,(\hat{x}_{s,n},y_{s,n})$ and $(\hat{X}_s,Y_s)$ is a random draw from $\widehat{\Pr}_s$. As a consequence, we have that 
    \begin{equation}\label{eq:calibrated-CEP}
         \widehat{\Pr}_s(\hat{X}_s=x,Y_s>t_{Y,s}) = x\cdot \widehat{\Pr}_s(\hat{X}_s=x),
    \end{equation}
    for all $x\in[0,1]$ such that the denominator in \eqref{eq:CEP-location} is non-zero.

    The equivalence between the concavity of a ROC curve and the CEP being non-decreasing holds only for a single location at a time \citep{GneitingVogel2022, Dumbgen2023}. The aggregated ROC curve is the ROC curve associated with the family of aggregated contingency tables. Hence, it is also the ROC curve associated with a pair of random variables (i.e., a virtual forecast-observation pair) characterized by the aforementioned family. The aggregated CEP (i.e., the CEP of the virtual forecast-observation pair) is given by
    \begin{equation*}
        \mathrm{CEP}(u) := \frac{\sum_{s=1}^d \widehat{\Pr}_s(\hat{X}_s=\gamma_s(u),Y_s>t_{Y,s})}{\sum_{s=1}^d \widehat{\Pr}_s(\hat{X}_s=\gamma_s(u))}.
    \end{equation*}
    Hence, the concavity of the aggregated ROC curve is characterized by the aggregated CEP being non-decreasing. Substituting the numerator using \eqref{eq:calibrated-CEP} leads to 
    \begin{equation}\label{eq:aggregated-CEP}
        \mathrm{CEP}(u) = \frac{\sum_{s=1}^d \widehat{\Pr}_s(\hat{X}_s=\gamma_s(u)) \cdot \gamma_s(u)}{\sum_{s=1}^d \widehat{\Pr}_s(\hat{X}_s=\gamma_s(u))}
    \end{equation}
    and the aggregated ROC curve is concave if and only if $\mathrm{CEP(u)}$ is non-decreasing with $u$, for $u\in[0,1]$ such that the denominator in \eqref{eq:aggregated-CEP} is non-zero.
\end{proof}

\subsubsection{Corollary~\ref{cor:preservation-achievability}}

Corollary~\ref{cor:preservation-achievability} is a direct application of Theorem~\ref{thm:preservation-concavity} and the fact that a family of contingency tables with an achievable PR curve also has a concave ROC curve (and vice versa).

\subsection{Examples}

\begin{proof}[Proof of Proposition~\ref{prop:preservation-concavity-sufficient}.]
    Plugging $\gamma_s(u)=f(u)$ in \eqref{eq:aggregated-CEP} leads to $\mathrm{CEP}(u)=f(u)$ for all $u\in[0,1]$ such that the denominator in the aggregated CEP is non-zero.
\end{proof}

\begin{proof}[Proof of Eq.~\ref{eq:bounds-param-strat}.]
    At location $s$, we consider both a general parameterization $(\FAR_s^M(t), \HR_s^M(t))$, with $t\in\bar{\bbR}$ and the parameterization by FAR where the ROC curve is denoted by $R_s^M\in\mathcal{R}$ for forecaster $M=A,B$.

    Set $t_f(u)=\FAR_s^{B\ (-1)}(\FAR_s^A(\gamma_{A,s}(u)))$, it is the parameter at which the FAR of $B$ matches that of $A$ associated with the parameter $\gamma_{A,s}(u)$ :
    \begin{equation*}
        \FAR_s^{B}(t_f(u))=\FAR_s^A(\gamma_{A,s}(u)).
    \end{equation*}
    The FAR parameterization yields
    \begin{align*}
        \HR_s^{B}(t_f(u)) &= R_s^B(\FAR_s^{B}(t_f(u)));\\
                          &= R_s^B(\FAR_s^{A}(\gamma_{A,s}(u))).
    \end{align*}
    By definition, the fact that $A$ dominates $B$ in terms of ROC curves at location $s$ implies
    \begin{align}
        &\qquad R_s^B(\FAR_s^{A}(\gamma_{A,s}(u))) \leq R_s^A(\FAR_s^{A}(\gamma_{A,s}(u)));\nonumber\\
        &\Leftrightarrow \HR_s^{B}(t_f(u)) \leq \HR_s^{A}(\gamma_{A,s}(u))\label{eq:HR-ineq}.
    \end{align}
    Similarly, set $t_h(u)=\HR_s^{B\ (-1)}(\HR_s^A(\gamma_{A,s}(u)))$, the parameter such that the FAR of $B$ matches that of $A$ at the parameter value $\gamma_{A,s}(u)$: i.e., $\HR_s^{B}(t_h(u))=\HR_s^A(\gamma_{A,s}(u))$. Thus, from \eqref{eq:HR-ineq}, we have $\HR_s^{B}(t_f(u)) \leq \HR_s^{B}(t_h(u))$.     Since the ROC curves of $B$ at locations are concave, $\HR_s^{B\ 
    (-1)}$ is increasing and, thus,
    \begin{equation*}
        t_f(u)\leq t_h(u)
    \end{equation*}
    for all $u\in[0,1]$.

    Moreover, since $t_f(u)$ and $t_h(u)$ are the values of the parameter such that $\FAR_s^B$ matches $\FAR_s^A(\gamma_{A,s}(u))$ and $\HR_s^B$ matches $\HR_s^A(\gamma_{A,s}(u))$, respectively. Assuming that the parameterization of the ROC curves of $B$ is such that increasing parameter values lead to an increasing FAR and HR, the conditions of \eqref{eq:locationwise-ineq-FAR} and \eqref{eq:locationwise-ineq-HR} lead to 
    \begin{equation*}
        t_f(u)\leq \gamma_{B,s}(u) \leq t_h(u)
    \end{equation*}
    for all $u\in[0,1]$.
\end{proof}

\section{Bivariate Gaussian case}

Let a forecast-observation pair $(X,Y)$ follow a bivariate Gaussian distribution, i.e.,
\begin{equation*}
    (X,Y)\sim\calN\left(\left[\begin{array}{cc}
        \mu_X  \\
        \mu_Y 
    \end{array}\right],\left[\begin{array}{cc}
        \sigma_X^2 & \rho\sigma_X\sigma_Y \\
        \rho\sigma_Y\sigma_X & \sigma_Y^2
    \end{array}\right]\right),
\end{equation*}
with $\mu_X,\mu_Y\in\bbR$, $\sigma_X,\sigma_Y\in(0,\infty)$, and $\rho\in[-1,1]$. Let $t_X$ and $t_Y$ denote the decision and target thresholds, respectively.

\subsection{Conditional event probability}\label{appendix:bivariate-gaussian-cep}

The conditional event probability is defined as $\mathrm{CEP}(x):=\Pr(Y>t_Y\mid X=x)$. The conditional distribution of $Y$ given $X=x$ is Gaussian and expressed as
\begin{equation*}
    Y\mid X=x\sim\calN(\mu_{Y\mid x}, \sigma_{Y\mid x}^2),
\end{equation*}
with $\mu_{Y\mid x}=\mu_Y+\rho\sigma_Y(x-\mu_X)/\sigma_X$ and $\sigma_{Y\mid x}^2=(1-\rho^2)\sigma_Y^2$. Hence,
\begin{equation*}
    \mathrm{CEP}(x)=1-\Phi\left(\frac{(t_Y-\mu_Y)/\sigma_Y-\rho(x-\mu_X)/\sigma_X}{\sqrt{1-\rho^2}}\right).    
\end{equation*}

\subsection{False alarm rate, hit rate, recall, precision}\label{appendix:bivariate-gaussian-far}

Let $\Phi(\cdot)$ and $\Phi(\cdot,\cdot;\rho)$ be the univariate and bivariate Gaussian cumulative distribution functions, respectively. 
\begin{align*}
    \FAR(t_X)&:=\Pr(X>t_X\mid Y\leq t_Y)\\
    &=1-\frac{\Pr(X\leq t_X, Y\leq t_Y)}{\Pr(Y\leq t_Y)}\\
    &=1-\frac{\Phi\left(\frac{t_X-\mu_X}{\sigma_X},\frac{t_Y-\mu_Y}{\sigma_Y};\rho\right)}{\Phi\left(\frac{t_Y-\mu_Y}{\sigma_Y}\right)}
\end{align*}

\begin{align*}
    \HR(t_X):=\Rec(t_X)&:=\Pr(X>t_X\mid Y> t_Y)\\
    &=\frac{1-\Pr(X\leq t_X)-\Pr(Y\leq t_Y)+\Pr(X\leq t_X, Y\leq t_Y)}{1-\Pr(Y\leq t_Y)}\\
    &=\frac{1-\Phi\left(\frac{t_X-\mu_X}{\sigma_X}\right)-\Phi\left(\frac{t_Y-\mu_Y}{\sigma_Y}\right)+\Phi\left(\frac{t_X-\mu_X}{\sigma_X},\frac{t_Y-\mu_Y}{\sigma_Y};\rho\right)}{1-\Phi\left(\frac{t_Y-\mu_Y}{\sigma_Y}\right)}
\end{align*}

\begin{align*}
    \Pre(t_X)&:=\Pr(Y> t_Y\mid X>t_X)\\
    &=\frac{1-\Pr(X\leq t_X)-\Pr(Y\leq t_Y)+\Pr(X\leq t_X, Y\leq t_Y)}{1-\Pr(X\leq t_X)}\\
    &=\frac{1-\Phi\left(\frac{t_X-\mu_X}{\sigma_X}\right)-\Phi\left(\frac{t_Y-\mu_Y}{\sigma_Y}\right)+\Phi\left(\frac{t_X-\mu_X}{\sigma_X},\frac{t_Y-\mu_Y}{\sigma_Y};\rho\right)}{1-\Phi\left(\frac{t_X-\mu_X}{\sigma_X}\right)}
\end{align*}

\section{Spatial distribution of event counts}

\begin{figure}[H]
    \centering
    \includegraphics[width=.85\linewidth]{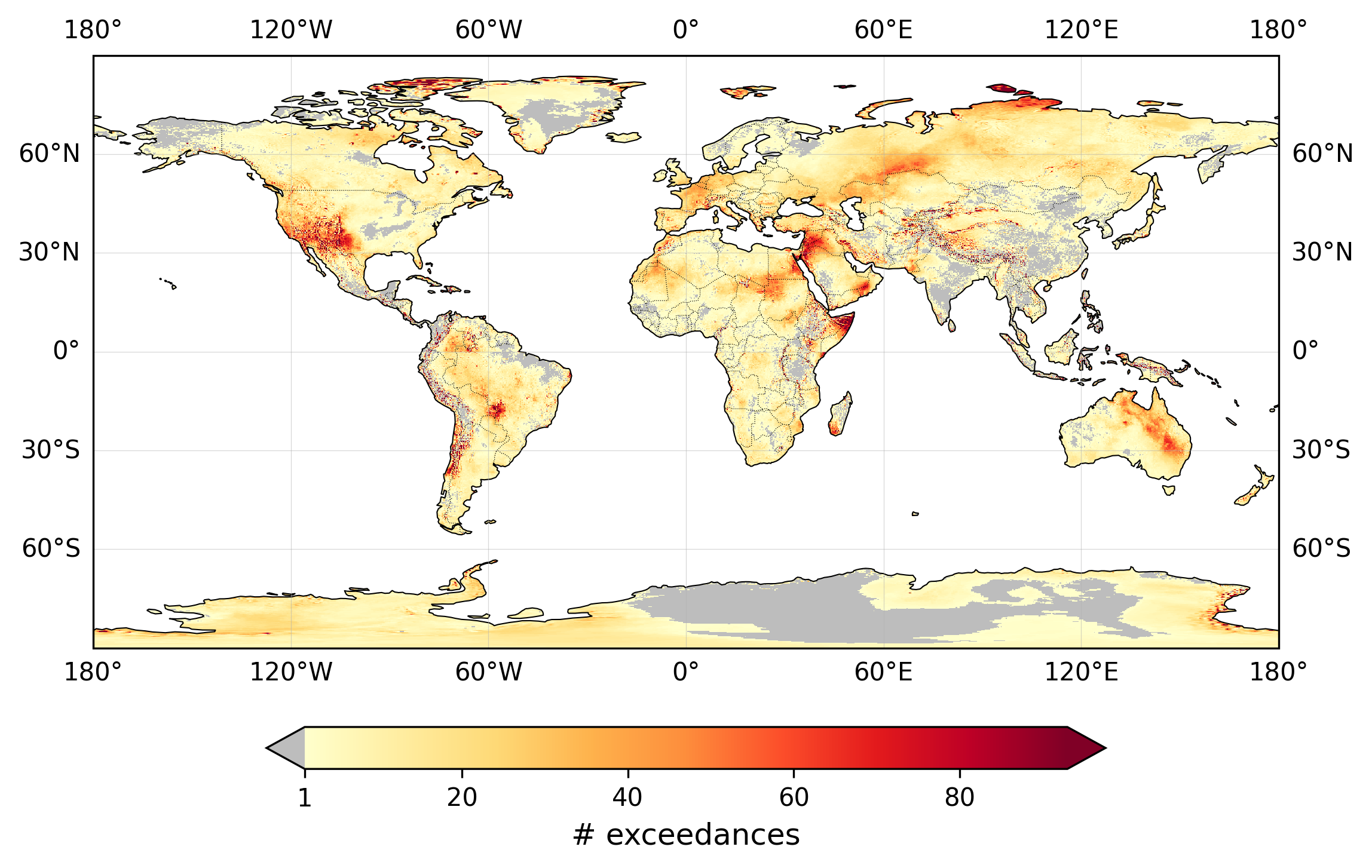}
    \caption{Spatial distribution of observed exceedances of the 98th climatological percentiles as described in Section~\ref{subsubsec:extreme-temperatures}.}
    \label{fig:counts-q98}
\end{figure}

\begin{figure}[H]
    \centering
    \includegraphics[width=.85\linewidth]{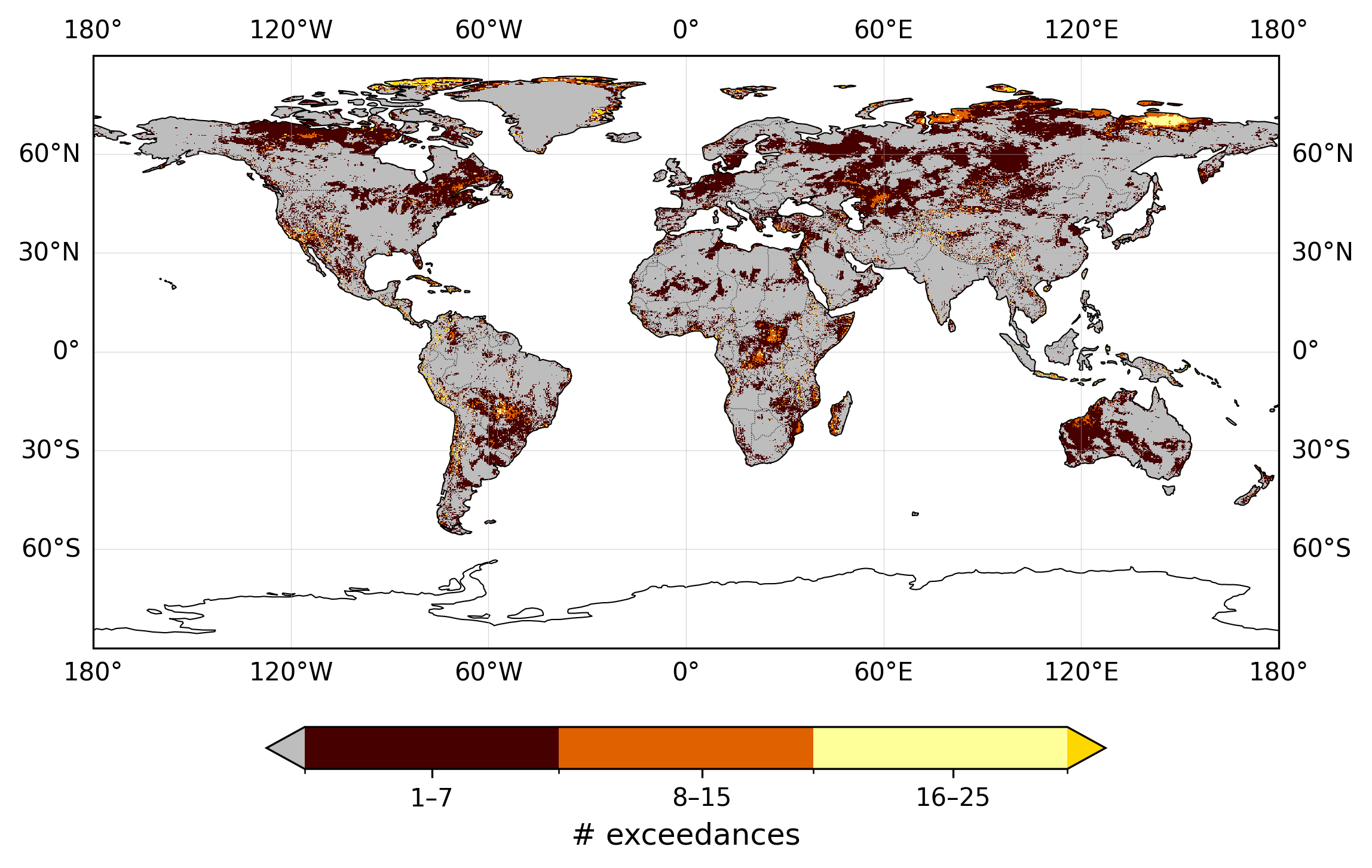}
    \caption{Spatial distribution of exceedances of observed record-breaking events as described in Section~\ref{subsubsec:extreme-temperatures}.}
    \label{fig:counts-max}
\end{figure}

\end{document}